\theoremstyle{plain}
\newtheorem{theorem}{Theorem}
\newtheorem{corollary}[theorem]{Corollary}
\newtheorem{lemma}[theorem]{Lemma}
\newtheorem{proposition}[theorem]{Proposition}
\theoremstyle{definition}
\newtheorem{definition}[theorem]{Definition}
\newtheorem{remark}[theorem]{Remark}
\newtheorem{notation}[theorem]{Notation}
\newcommand{\allignLabel}[1]{\refstepcounter{equation}(\theequation)\def\tmplab{#1}\ltx@label\tmplab}
\tikzstyle{gate}=[fill=white, draw=black, shape=rectangle, minimum height=0.43cm, minimum width=0.43cm, inner sep=0.1em]
\tikzstyle{control}=[fill=black, draw=black, shape=circle, scale=0.38]
\tikzstyle{not}=[shape=circle, path picture={ 
\tikzstyle{wcontrol}=[fill=white, draw=black, shape=circle, scale=0.35]
\tikzstyle{empty}=[fill=white, draw=black, shape=rectangle, inner sep=0.4em, emptyborder]
\tikzstyle{globalphase}=[fill=white, draw=black, inner sep=0.15em, shape=rounded rectangle, minimum height=0.4cm]
\tikzstyle{ancilla}=[fill=black, draw=black, shape=rectangle, minimum width=0.01cm, minimum height=0.25cm, inner sep=0.01em]
\tikzstyle{ground}=[fill=white, path picture={\draw[black](-1.5mm,0)--(-0.6mm,0);\draw[black,thick](-0.6mm,-1.75mm)--(-0.6mm,1.75mm) (0mm,-0.9mm)--(0mm,0.9mm) (0.6mm,-0.5mm)--(0.6mm,0.5mm);}, minimum width=0.1mm, draw=none, outer sep=0pt]
\tikzstyle{gate22}=[fill=white, draw=black, shape=rectangle, minimum height=.68cm, minimum width=0.6cm]
\tikzstyle{void}=[shape=rectangle, minimum height=0.5cm]
\tikzstyle{gate44}=[fill=white, draw=black, shape=rectangle, minimum height=1.43cm, minimum width=0.5cm]
\tikzstyle{divider}=[fill={rgb,255: red,220; green,220; blue,220}, draw=black, shape border rotate=90, regular polygon, regular polygon sides=3, inner sep=1.5pt, rounded corners=0.5mm]
\tikzstyle{gatherer}=[fill={rgb,255: red,220; green,220; blue,220}, draw=black, shape border rotate=-90, regular polygon, regular polygon sides=3, inner sep=1.5pt, rounded corners=0.5mm]
\tikzstyle{hyperedge}=[fill=white, draw=black, shape=rectangle, rounded corners=0.1cm, minimum height=.6cm, minimum width=.6cm]
\tikzstyle{square}=[fill=white, draw=black, shape=rectangle, minimum height=0.20cm, minimum width=0.20cm, inner sep=0.1em, thick]
\tikzstyle{gphase}=[rounded rectangle, rounded rectangle arc length=120, fill={zx_grey}, inner sep=2pt, font={\tiny\boldmath}, label distance=1mm, fill opacity=.8, text opacity=1, tikzit category=ZX]
\tikzstyle{customcontrol}=[fill=white, draw=black, inner sep=0.1em, shape=rounded rectangle, minimum height=0.2cm]
\tikzstyle{emptyborder}=[-, dash pattern=on 0.16em off 0.16em on 0.16em off 0.16em on 0.16em off 0em]
\tikzstyle{etc}=[-, draw=black, densely dashed, thick]
\tikzstyle{greyetc}=[-, draw={rgb,255: red,161; green,161; blue,161}, densely dashed, thick]
\tikzstyle{dots}=[-, dotted, draw=black, thick]
\tikzstyle{big}=[-, thick]
\tikzstyle{register}=[-, double]
\tikzstyle{grey}=[-, draw={rgb,255: red,161; green,161; blue,161}]
\tikzstyle{border}=[-, fill=white]
\newcommand{\tf}[1]{\scalebox{0.90}{\tikzfig{#1}}}
\newcommand\Label[1]{\refstepcounter{equation}(\theequation)\ltx@label{#1}}
\def\bR{\begin{color}{red}}
\def\bB{\begin{color}{blue}}
\def\bM{\begin{color}{magenta}}
\def\bC{\begin{color}{cyan}}
\def\bW{\begin{color}{white}}
\def\bBl{\begin{color}{black}}
\def\bG{\begin{color}{green}}
\def\bY{\begin{color}{yellow}}
\def\e{\end{color}\xspace}
\newcommand{\bit}{\begin{itemize}}
\newcommand{\eit}{\end{itemize}\par\noindent}
\newcommand{\ben}{\begin{enumerate}}
\newcommand{\een}{\end{enumerate}\par\noindent}
\newcommand{\beq}{\begin{equation}}
\newcommand{\eeq}{\end{equation}\par\noindent}
\newcommand{\beqa}{\begin{eqnarray*}}
\newcommand{\eeqa}{\end{eqnarray*}\par\noindent}
\newcommand{\beqn}{\begin{eqnarray}}
\newcommand{\eeqn}{\end{eqnarray}\par\noindent}
\newcommand{\alt}{~\mid~}
\newcommand{\cat}[1]{\mathbf{#1}}
\newcommand{\ccat}[1]{\mathbf{c #1}}
\newcommand{\kcat}[1]{\mathbf{\widetilde{#1}}}
\newcommand{\scat}[1]{\mathbf{\ovl{#1}}}
\newcommand{\tcat}[1]{\mathbf{\ovl{#1}}_2}
\newcommand{\symtimes}{\gamma^\otimes}
\newcommand{\symplus}{\gamma^\oplus}
\newcommand{\id}{\mathrm{id}}
\newcommand{\Z}{{\mathbb Z}}
\newcommand{\R}{{\mathbb R}}
\newcommand{\defeq}{\stackrel{\textrm{{\scriptsize def}}}{=}}
\newcommand{\quo}{_{/\simeq}}
\newcommand\embed\hookrightarrow
\newcommand\natto\Rightarrow
\newcommand{\abs}[1]{\left\vert #1 \right\vert}
\newcommand{\iid}{\mathrm{id}}
\newcommand{\den}[1]{\left\llbracket #1 \right\rrbracket}
\newcommand\sem\den
\newcommand{\ovl}[1]{\overline{#1}}
\DeclareFontFamily{U}{min}{}
\DeclareFontShape{U}{min}{m}{n}{<-> udmj30}{}
\newcommand{\secref}[1]{\S \ref{#1}}
\title{One rig to control them all}
\author[1]{Chris Heunen}
\author[2]{Robin Kaarsgaard}
\author[1]{Louis Lemonnier}
\affil[1]{University of Edinburgh}
\affil[2]{University of Southern Denmark}
\date{}
\begin{document}

\maketitle

\begin{abstract}
	Controlled commands---computations whose execution depends on a separate
	input---play a central role in reversible Boolean circuits and quantum
	circuits. However, existing formalisms typically treat control only
	implicitly, entangled with other aspects of computation. From a semantic
	perspective, control is most naturally expressed in semisimple rig
	categories, which---unlike standard circuit models such as
	props---support both parallel and conditional composition.

	We present a construction that freely adjoins an explicit syntactic
	notion of control to a circuit theory specified as a suitable prop,
	subject to eight universally quantified equations. Our main result is
	that these equations are sound and complete for the intended semantics of
	control: the resulting theory satisfies a universal property, identifying
	it exactly as the circuit subtheory of the free semisimple rig
	completion. The proof combines coherence for rig categories with a new
	method based on induction over Gray codes.

	We illustrate the usefulness of the framework by showing that it
	simplifies several existing sound and complete axiomatisations of quantum
	circuits, isolating a small and conceptually clean set of generators and
	equations. In addition, the same equations yield a sound and complete
	axiomatisation of the multiply controlled Toffoli gate set, that is
	universal for reversible Boolean circuits.
\end{abstract}

\section{Introduction}

Many computations contain \emph{controlled} commands, that is, commands that are executed depending on the value of some memory cell.
By \emph{control} we mean the aspects of a computation that govern these dependencies.\footnote{We do not mean `control theory' as used in e.g.\ engineering~\cite{erbele:control}.}
Typically, the controlled command acts on one part of memory, and the controlling memory cell resides in another part of memory.
To be more precise, for example, consider controlled negation in reversible Boolean circuits: the target bit is flipped depending on the value of the control bit.
The goal of this article is to identify, separate, and study in isolation, this notion of computational control.

Traditional control flow or data flow is often mixed up with other computational aspects of circuits.
For example, in reversible Boolean or quantum circuits, multi-controlled gates such as the Toffoli gate are integral to universality and not treated differently than other, uncontrolled, gates~\cite{toffoli}.
Yet separating out the controlled aspects of a computation as specified by a circuit has several benefits.
\begin{enumerate}[(i)]
 \item \label{challenge:understand}
 Multi-controlled gates are of foundational importance in many computational theories, including Boolean logic~\cite{vollmer:circuits}, reversible computation~\cite{toffoli,thomsenkaarsgaardsoeken:ricercar}, and quantum computation~\cite{shendeetal:synthesis}. Isolating their control logic can help to better \emph{understand} these theories.
 \item \label{challenge:optimise}
 In quantum hardware, (multi-)controlled gates are among the most costly ones to perform physically~\cite{barencoetal:gates,yuduanying:toffoli,chenetal:control}. Separating out the control aspects can help find better optimisation strategies~\cite{balaucaarusaoie:simulating}.
 In general, partitioning off control aspects can help to \emph{optimise} computations in a generic way that is independent of the `base circuit theory' and therefore more efficient to apply.
 \item \label{challenge:simplify}
 Several recent results about logical completeness for quantum computation rely on elaborate families of equations~\cite{bianselinger:cliffordt,bianselinger:unitaries,lirossselinger:orthogonal,clementetal:extensions,clementetal:minimal,clementetal:complete,clementetal:coherentcontrol,clementperdrix:pbs,fangetal:hadamardpi}. Cordoning off control aspects can \emph{simplify}, and thereby clarify the core status of some and make them more modular.
\end{enumerate}

\begin{figure}
	\fbox{\begin{minipage}{0.99\textwidth}\begin{align*}
		\qquad \qquad \tf{monoidal-comp-2}
		& \;\stackrel{\ref{cprop:composition}}{=}\;
		\tf{monoidal-comp}
		&
		\tf{monoidal-id-2}
		& \;\stackrel{\ref{cprop:identity}}{=}\;
		\tf{monoidal-id}
		\\[2ex]
		\tf{monoidal-control-2}
		& \;\stackrel{\ref{cprop:strength}}{=}\;
		\tf{monoidal-control}
		&
		\tf{colour-change-2}
		& \;\stackrel{\ref{cprop:colour}}{=}\;
		\tf{colour-change}
		\\[2ex]
		\tf{complementarity}
		& \;\stackrel{\ref{cprop:complementarity}}{=}\;
		\tf{complementarity-2}
		&
		\tf{commuta}
		& \;\stackrel{\ref{cprop:commutativity}}{=}\;
		\tf{commuta-2}
		\\[2ex]
		\tf{swap-rule}
		& \;\stackrel{\ref{cprop:swap}}{=}\;
		\tf{swap-rule-2}
		&
		\tf{swap-coherence-2}
		& \;\stackrel{\ref{cprop:swapcoh}}{=}\;
		\tf{swap-coherence}
	\end{align*}\end{minipage}}
	\caption{Control equations.}
	\label{fig:control-equations}
\end{figure}

This article addresses these three challenges by introducing a theory of control governed by a handful of equations (see \Cref{fig:control-equations}). We argue that these equations completely capture control as follows.
\begin{enumerate}[(i)]
	\item
		The equations have clear computational interpretations, and several have appeared in the literature before~\cite{sharmaachour:optimizing,thomsenkaarsgaardsoeken:ricercar}.
		Additionally, we show that the equations are canonical in a strong way, by relating them to the natural mathematical notion of a \emph{rig category}~\cite{laplaza1972coherence,bimonoidal-book}.
		Starting with an arbitrary `base circuit theory', we syntactically construct a new `controlled circuit theory'. We do this in the most general setting possible, using \emph{prop}s~\cite{maclane:props,ghicajung:digitalcircuits,bonchisobocinskizanasi:hopf}. The construction has a universal property: roughly, the controlled prop is the free rig category on the base prop.
		This is borne out in examples: starting with the circuit theory consisting of a single NOT gate, the controlled theory is universal for reversible Boolean circuits. Starting with an additional Hadamard gate, the controlled theory is universal for quantum circuits~\cite{shi:aharonov}. Starting with a single $\sqrt{X}$ gate in fact suffices for the latter~\cite{kaarsgaard:ctrlv}.

	\item
		The coherence theorems for rig categories~\cite{laplaza1972coherence,bimonoidal-book}, and the fact that the control theory (of \Cref{fig:control-equations}) consists of only eight unquantified equations, give rise to many optimisation strategies.
		For example, we will show that the control theory suffices to syntactically derive equations like the following Sleator-Weinfurter decomposition quickly, that were before only known to hold semantically via matrix calculations in special cases~\cite{sleatorweinfurter,carette2024easy}:
		\[
			\tf{sl-w}
			=
			\tf{sl-w-8}
		\]

	\item The equations simplify related work. The first works on complete
		equational theories for quantum circuits~\cite{clementetal:complete,
		clementetal:minimal, clementetal:extensions} contain a notion of
		\emph{structural} equations, without a mathematical account of this
		notion. The same holds for~\cite{bianselinger:cliffordt,
		bianselinger:unitaries, lirossselinger:orthogonal}. 

		The closest work to ours is a sister notion of `controlled
		prop'~\cite{delorme2025control}, which is regarded as a property of
		a category rather than a construction: this article starts from a prop
		$\cat P$, and (under some conditions) constructs a prop $\ccat P$ that
		fits exactly the equations of a rig category, which is the result of a
		``controlled prop  construction'', similar to how a category can be
		completed with colimits.  On the other hand, \cite{delorme2025control} characterises categories that already
		support a ``control'' endofunctor that fits some equations, which also
		qualify as ``controlled props'', similarly to how $\cat{Set}$ verifies the property
		of having colimits. 
		Their setup covers a wider range of base theories, as it needs not
		consider NOT gates and colour changes: in their sense, our
		construction outputs a twice `controlled prop', with a distinguished
		involution to change between the two controls. Our approach allows for
		a strong connection between control and the structure of rig
		categories. 
	
		Similarly, the literature contains a set of template-based
		rewrite rules for control~\cite{thomsenkaarsgaardsoeken:ricercar}, but
		with missing equations (\emph{e.g.} \ref{cprop:swap} and
		\ref{cprop:swapcoh}) and therefore no proof of completeness.

		Our work also structures and elucidates a line of research on
		quantum programming languages taking semantics in rig
		categories~\cite{carette2016computing, heunen2022information,
		carette2023quantum, carette2024easy, fangetal:hadamardpi}. Finally, our
		results obviate work on circumventing no-control theorems by
		restricting to specific gate sets~\cite{bisioetal:conditional}.
\end{enumerate}

These results also substantiate the claim in the title, that rig structure encapsulates controlled computation, and only controlled computation. Thus rig categories form the bare minimum model of computation: the ability to compose instructions sequentially (with $\circ$), to consider data in parallel (with $\otimes$), and to use one piece of data to condition computations on another (using $\oplus$). This may also explain the ubiquity of matrices.

We proceed as follows. First, we reviews background material on props, rig categories, permutations, and Gray codes (in \Cref{sec:background}).
Then we introduce the main syntactic construction (in \Cref{sec:control}), and show that it semantically corresponds to rig completion (in \Cref{sec:rig}). The latter proof uses in a pleasingly structural way the new technique of \emph{Gray induction}, that proves a property for bitstrings by induction not on the normal successor of natural numbers, but by its Gray code.
After that, we discuss several applications of the theory (in \Cref{sec:examples}): a singly-generated universal theory for controlled reversible Boolean circuits, a generalised Sleator-Weinfurter decomposition~\cite{sleatorweinfurter}, a relationship to modal quantum theory~\cite{schumacherwestmoreland:modal}, and a universal singly-generated theory for controlled quantum circuits~\cite{kaarsgaard:ctrlv}, and complete equational theories for quantum circuits over various gate sets.
We conclude by indicating possible future work (in \Cref{sec:conclusion}).

\section{Background}
\label{sec:background}

\subsection{Props, rig categories, and bipermutative categories}
We assume familiarity with basic category theory, including (strict)
symmetric monoidal categories, and recall here some basic facts about
props, rig categories, and bipermutative categories. For a more in
depth treatment on these topics, the interested reader is invited to
consult, e.g., \cite{bimonoidal-book,maclane:props}.

A \emph{prop} is a strict symmetric monoidal category whose objects
are natural numbers, and where the monoidal product is given by
addition. A \emph{prop morphism} is an identity-on-objects strict
symmetric monoidal functor between props. Props are naturally regarded
as categories of diagrams, and play a similar role in the presentation
of symmetric monoidal categories as free groups do in combinatorial
group theory (see, e.g., \cite{bonchisobocinskizanasi:hopf}).

Rig categories are categories that are symmetric monoidal in two
different ways, such that one distributes over the other, similar to
how multiplication distributes over addition in arithmetic. More
formally, a rig category $(\cat{C},\otimes,\oplus,O,I)$ has two
symmetric monoidal structures $(\cat{C},\otimes,I)$ (the
\emph{(tensor) product}, with associator $\alpha^\otimes$, unitors
$\lambda^\otimes$ and $\rho^\otimes$, and symmetry $\symtimes$)
and $(\cat{C},\oplus,O)$ (the \emph{(direct) sum}, with associator
$\alpha^\oplus$, unitors $\lambda^\oplus$ and $\rho^\oplus$, and
symmetry $\symplus$), and come with natural coherence
isomorphisms witnessing distributivity on the right $\delta_R : (A
\oplus B) \otimes C \to (A \otimes C) \oplus (B \otimes C)$ and
$\delta^0_R : O \otimes A \cong O$ and analogously on the left,
subject to a significant number of coherence conditions (see
\cite{bimonoidal-book}). When it is clear from the context, we will
elide marking coherence isomorphisms with their monoidal structure,
and simply write, e.g., $\alpha$ rather than $\alpha^\oplus$.

Bipermutative categories are the strictified version of rig categories: the two
symmetric monoidal structures are strict, and right distributivity becomes an
equality (such that $(A \oplus B) \otimes C = (A \otimes C) \oplus (B \otimes
C)$), as do both nullary distributors (such that $A \otimes O = O = O \otimes
A$), subject to three coherence conditions (see~\cite{may,bimonoidal-book}).
That this is the right notion of strict rig category is shown by the coherence
theorem that every rig category is rig equivalent to a bipermutative
one~\cite[Prop. 3.5]{may}. We say that a bipermutative category is
\emph{semisimple} when its object are natural numbers, and the two strict
symmetric monoidal structures are multiplication and addition of natural
numbers on objects. Semisimple bipermutative categories are naturally regarded
as props by forgetting the tensor product $(\otimes,I)$.

\begin{proposition}
	\label{prop:powers}
	Given a semisimple bipermutative category $\cat C$ and a natural number
	$k$, the full subcategory of $\cat C$ whose objects are powers of $k$ is a
	prop $(\cat C_k, \otimes, 1)$.
\end{proposition}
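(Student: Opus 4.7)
The plan is to verify that $\cat C_k$ inherits the strict symmetric monoidal structure from $(\cat C, \otimes, 1)$, and then to exhibit the relabeling $k^n \leftrightarrow n$ that turns this inherited structure into a prop in the usual sense.

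First I would check closure. Because $\cat C$ is semisimple, the tensor product acts on objects as multiplication of natural numbers, so $k^m \otimes k^n = k^{m+n}$ is again a power of $k$, and the unit $I = 1 = k^0$ lies in $\cat C_k$. Since $\cat C_k$ is a \emph{full} subcategory, all coherence morphisms of $(\cat C, \otimes, 1)$ between objects of $\cat C_k$—including the symmetries $\gamma^\otimes_{k^m, k^n}$—are automatically morphisms of $\cat C_k$. Strictness of associativity and unitality for $\otimes$ is inherited directly from $\cat C$, so $(\cat C_k, \otimes, 1)$ is a strict symmetric monoidal subcategory.

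Next I would set up the bijection $\N \to \Ob(\cat C_k)$ given by $n \mapsto k^n$. Under this renaming, the operation on objects induced by $\otimes$ sends $(m,n) \mapsto k^m \otimes k^n = k^{m+n}$, that is, addition of natural numbers, and the unit $k^0$ corresponds to $0 \in \N$. The three defining conditions of a prop then hold for $\cat C_k$: it is strict symmetric monoidal, its objects are natural numbers, and its monoidal product on objects is addition.

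There is no genuine obstacle here; the content of the proposition is simply that the powers of $k$ form a submonoid of $(\N, \cdot, 1)$ that is isomorphic via $k^n \leftrightarrow n$ to $(\N, +, 0)$, and that semisimplicity of $\cat C$ makes all the ambient coherence strict, so this monoid isomorphism lifts to a prop structure on the full subcategory on those objects.
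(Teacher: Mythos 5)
Your argument is correct and is exactly the routine verification that the paper leaves implicit (the proposition is stated there without proof): closure of the powers of $k$ under multiplication, fullness of the subcategory supplying the symmetries and strict coherence, and the relabelling $k^n \leftrightarrow n$ converting the multiplicative monoid into $(\N,+,0)$. The only point you might flag is the degenerate case $k\le 1$, where $n \mapsto k^n$ fails to be injective and the relabelling breaks down; since the paper only ever instantiates $k=2$, this does not affect anything downstream.
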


\begin{remark}
	By using the proposition above to work on bits by choosing powers of $2$,
	we have a natural involution with the symmetry $\symplus_{1,1}$. If all
	morphisms are endomorphisms, then we obtain a controllable prop $(\cat C_2,
	\otimes, 1, \symplus_{1,1})$ as defined below (see
	Definition~\ref{def:controllable}).
\end{remark}

We provide the most natural example of such a category in the next section.

\subsection{Permutations}
An important example of a semisimple bipermutative category is the
category of permutations $\cat{Perm}$: its objects are natural
numbers, and morphisms $n \to n$ are permutations $[n] \to [n]$. The
\emph{direct sum} $\oplus$, on objects, simply sums the numbers, and
on morphisms, concatenates them.
\[
	f \oplus g =
	\left\{
		\begin{array}{ccll}
			[m+n] & \to & [m+n] & \\
			i & \mapsto & f(i) & \text{if } i<m\text, \\
			i & \mapsto & g(i-m)+m & \text{otherwise.}
		\end{array}
	\right.
\]
The symmetry $\symplus$ in this category is defined as follows.
\[
	\symplus_{m,n} =
	\left\{
		\begin{array}{ccll}
			[m+n] & \to & [n+m] & \\
			i & \mapsto & n+i & \text{if } i<m\text, \\
			i & \mapsto & i-m & \text{otherwise.}
		\end{array}
	\right.
\]
The \emph{tensor product} $\otimes$ multiplies the two numbers on objects, and acts on morphisms as follows.
\[
	f \otimes g =
	\left\{
		\begin{array}{ccll}
			[mn] & \to & [mn] & \\
			an+b & \mapsto & f(a)n + g(b) &
		\end{array}
	\right.
\]
The symmetry $\symtimes$ in this category is defined as:
\[
	\symtimes_{m,n} =
	\left\{
		\begin{array}{ccll}
			[mn] & \to & [nm] & \\
			an+b & \mapsto & bm+a & \\
		\end{array}
	\right.
\]
In fact, $\cat{Perm}$ is both the free prop and the free bipermutative
category. This means that the tensor product and its symmetries can entirely be
described in terms of $\oplus$ and $\symplus$~\cite[Lemma
S3]{caretteetal:free}. In particular, for all $f \colon m \to m$ and $g \colon
n \to n$:
\begin{align}
	\label{eq:kron}
	f \otimes g &=
	\symtimes_{n, m} \circ (\underbrace{ f \oplus
	 f \oplus \cdots \oplus f}_{n\text{ times}}) \circ
	\symtimes_{m,n} \circ
	(\underbrace{ g \oplus g \oplus
	\cdots \oplus g}_{m\text{ times}}) \\[1.5ex]
	&=
	(\underbrace{ g \oplus g \oplus
	\cdots \oplus g}_{m\text{ times}})
	\circ
	\symtimes_{n, m} \circ (\underbrace{ f \oplus
	 f \oplus \cdots \oplus f}_{n\text{ times}}) \circ
	\symtimes_{m,n}
\end{align}
Note that permutations are generated by transpositions
$\tau_i = (i \;\; i+1)$, subject to equations $\tau_i^2 = \iid$ and $(\tau_i \tau_{i+1})^3 = \iid$ and $\tau_i \tau_j = \tau_j \tau_i$ when $i+1<j$.

\subsection{Gray codes}
\label{sec:gray}

In this paper, we work with a similar but different way of generating
permutations with transpositions on Gray codes. Gray codes are a way of
ordering bit strings such that the transition from any bit string to its
successor consists of a single bit flip--i.e., such that the Hamming
distance between adjacent bit strings is $1$. Gray codes play an important
role in the synthesis of reversible circuits (see, e.g.,
\cite{nielsenchuang}).

For example, a Gray code of bit strings of length $3$ is shown in
Figure~\ref{fig:gray}.
\begin{wrapfigure}{r}{0.22\textwidth}
	\fbox{\begin{minipage}{0.185\textwidth}
		\centering$\begin{array}{c|c}
			\text{Binary} & \text{Gray} \\\hline
			000 & 000 \\
			001 & 001 \\
			010 & 011 \\
			011 & 010 \\
			100 & 110 \\
			101 & 111 \\
			110 & 101 \\
			111 & 100
		\end{array}$\end{minipage}}
	\caption{A Gray code on bit strings of length $3$.}
	\label{fig:gray}
\end{wrapfigure}

More formally, we define a function $h_n \colon [2^n] \to
\{0,1\}^n$ as:
\[
h_0(0) = \varepsilon
\qquad
h_n(i) =
\left\{
\begin{array}{ll}
 0 \cdot h_{n-1}(i) & \text{if } i < 2^{n-1}\text, \\
 1 \cdot h_{n-1}(2^n-1-i) & \text{otherwise.}
\end{array}
\right.
\]
We then fix \emph{Gray transpositions} $\theta_{n,i}$ to swap
$h_n(i)$ with $h_n(i+1)$. The two latter words have common prefix $w_n(i)$
and common suffix $w'_n(i)$.
We also define permutations $r_n \colon [2^n] \to [2^n]$ that translate
between the lexicographic order and the Gray order:
\[
r_0(0) = \iid_1
\qquad
r_n(i) =
\left\{
\begin{array}{ll}
 r_{n-1}(i) & \text{if } i < 2^{n-1}\text, \\
 2^{n-1} + r_{n-1}(2^n-1-i) & \text{otherwise.}
\end{array}
\right.
\]
In other words, the transposition $\theta_{n,i}$ swaps $r_n(i)$ and $r_n(i+1)$.

\begin{proposition}
	\label{prop:gray-transposition}
	For all $n$ and $i$, we have:
	\[
		\theta_{n,i} =
		r_n^{-1} \circ (\iid_i \oplus \symplus_{1,1} \oplus \iid_{2^n - 2 - i}) \circ r_n \text.
	\]
\end{proposition}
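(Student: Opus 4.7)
The plan is to prove this identity by exploiting the group-theoretic fact that conjugating an adjacent transposition produces the transposition of the image pair. Concretely, for any permutation $\sigma$ on $[N]$ and the adjacent transposition $\tau_j = \iid_j \oplus \gamma_{1,1} \oplus \iid_{N-2-j}$ swapping $j$ with $j+1$, conjugation yields a transposition of the form $(\sigma(j),\sigma(j+1))$, with the dual direction producing $(\sigma^{-1}(j),\sigma^{-1}(j+1))$. The task therefore reduces to identifying the swapped pair produced by conjugating $\tau_i$ by $r_n$ with the pair $(h_n(i), h_n(i+1))$, where bit strings are interpreted as elements of $[2^n]$ via the binary identification.

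The key supporting step is a short lemma stating that $r_n$ realises the translation from Gray order to binary order in the sense that $r_n(i) = \mathrm{bin}(h_n(i))$, where $\mathrm{bin}(b_1 \cdots b_n) = \sum_k b_k 2^{n-k}$. I would prove this by induction on $n$. The base case $n = 0$ is immediate. In the step case, when $i < 2^{n-1}$ the recursive clauses for $h_n$ and $r_n$ both prepend a leading $0$ (contributing nothing to the binary value) and recurse into $h_{n-1}(i)$ and $r_{n-1}(i)$, which are equal by the induction hypothesis. When $i \geq 2^{n-1}$, both clauses prepend a leading $1$ (contributing $2^{n-1}$) and recurse at the reflected index $2^n - 1 - i$, so the equality again closes under the induction hypothesis.

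Combining these two ingredients, the conjugation in the proposition unpacks to the transposition of $\mathrm{bin}(h_n(i))$ and $\mathrm{bin}(h_n(i+1))$, which is exactly $\theta_{n,i}$ by definition. The main obstacle is not conceptual but bookkeeping: aligning the direction of the conjugation with the intended transposition (whether we land on the images of $r_n$ or of $r_n^{-1}$), and correctly managing the index-reversal $2^n - 1 - i$ in the inductive step. No structural ingredient beyond the two recursive definitions and the standard conjugation identity is required, so the proof is essentially a definition-unfolding exercise once the supporting lemma is in place.
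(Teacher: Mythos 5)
Your strategy is the right one, and it is essentially the argument the paper implicitly relies on: the paper states this proposition without proof, immediately after the remark that ``the transposition $\theta_{n,i}$ swaps $r_n(i)$ and $r_n(i+1)$,'' which is precisely your supporting lemma $r_n(i) = \mathrm{bin}(h_n(i))$ combined with the definition of $\theta_{n,i}$. Your induction for that lemma is correct: both recursions branch on $i < 2^{n-1}$, prepend the same leading bit (contributing $0$ or $2^{n-1}$), and recurse at the same, possibly reflected, index $2^n-1-i$, so the two definitions compute the same value.

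The one point you defer as ``bookkeeping'' is, however, the only place where the proof can actually fail, and it must be settled explicitly. With the right-to-left reading of $\circ$ that the paper uses elsewhere (e.g.\ in the Kronecker formula~\eqref{eq:kron}), the conjugate $r_n^{-1} \circ (\iid_i \oplus \gamma_{1,1} \oplus \iid_{2^n-2-i}) \circ r_n$ is the transposition $(r_n^{-1}(i)\;\, r_n^{-1}(i+1))$, not $(r_n(i)\;\, r_n(i+1))$, and these genuinely differ: for $n=3$, $i=3$ one has $r_3 = (0,1,3,2,6,7,5,4)$ as a list of values, so $\theta_{3,3}$ swaps $\mathrm{bin}(010)=2$ with $\mathrm{bin}(110)=6$, whereas $(r_3^{-1}(3)\;\, r_3^{-1}(4)) = (2\;\,7)$. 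Your final step lands on $(\mathrm{bin}(h_n(i))\;\,\mathrm{bin}(h_n(i+1))) = (r_n(i)\;\,r_n(i+1))$, so you are implicitly computing the conjugation $r_n \circ \tau_i \circ r_n^{-1}$; to match the statement as printed you must either read the composition diagrammatically or observe that $r_n$ and $r_n^{-1}$ should be exchanged in the displayed formula. (A sanity check against the inductive formula~\eqref{eq:ind-theta} at $i = 2^n-1$ confirms that $\theta_{n,i} = (r_n(i)\;\,r_n(i+1))$ is the intended reading.) Once that direction is pinned down, your proof is complete and is exactly the definition-unfolding argument the paper leaves implicit.
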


Since the translation permutations are defined by induction on the number of
(qu)bits, we then obtain an inductive definition for the Gray transpositions.
\begin{equation}\label{eq:ind-theta}
	\theta_{n+1,i} =
	\left\{
		\begin{array}{ll}
			\theta_{n,i} \oplus \iid_{2^n} & \text{if } i < 2^n - 1\text, \\[1.5ex]
			\symtimes_{2^n,2} \circ (\iid_{2^n} \oplus \symplus_{1,1} \oplus
			\iid_{2^n-2}) \circ \symtimes_{2,2^n}
			& \text{if } i = 2^n - 1\text, \\[1.5ex]
			\iid_{2^n} \oplus \theta_{n,2^{n+1} - 1 - i} & \text{otherwise.}
		\end{array}
	\right.
\end{equation}

\section{Controlled props}
\label{sec:control}
In this section, we will show how to freely adjoin control to a prop,
such that arbitrarily controlled versions of all morphisms of the
original prop are morphisms in the controlled prop. For this
to make sense, the prop must satisfy some lightweight requirements,
namely those of a \emph{controllable prop}.

\begin{definition}
	\label{def:controllable}
	A \emph{controllable prop} $(\cat P, +, 0, x)$, or \emph{crop} for short, is
	a prop $(\cat P, +, 0)$ whose morphisms are endomorphisms and in which one
	morphism $x \colon 1 \to 1$ is a distinguished involution. We sometimes
	refer to this involution as the \emph{NOT gate}. A crop morphism is a prop
	morphism of the underlying props that preserves the chosen involution.
\end{definition}

We focus on endomorphisms, because those fit our intended interpretation, our intended models, and equations~\ref{cprop:complementarity} and~\ref{cprop:commutativity} only make sense for endomorphisms. In principle, however, one could generalise to arbitrary morphisms when dropping those two equations.

\begin{notation}
	We write $\sigma$ for the symmetry of the monoidal structure $+$ in a prop,
	not to be confused with the symmetries $\symplus$ and $\symtimes$ in a
	bipermutative category.
\end{notation}

One important example of controllable prop is the prop $\cat X$ generated
only by the NOT gate $x$.

We introduce a construction on controllable props, that consists in adding a
control operator. In order to have readable equations and better presentation,
we allow ourselves to write both a \emph{positive} control (on the left) and a
\emph{negative} control (on the right). In fact, only one of them is necessary,
since the other can be obtained by conjugating with the NOT morphism
(see~\ref{cprop:colour} in Definition~\ref{def:cprop}).
\[
	C^1(f) \colon \tf{def-control}
	\qquad
	C^0(f) \colon \tf{def-control-neg}
\]
Both these operators are defined as functors. This means that they preserve
both composition and identities.
\[
	\tf{functor-control-compo-2} = \tf{functor-control-compo}
	\qquad
	\tf{functor-control-identity} = \tf{functor-control-identity-2}
\]
We represent the chosen involution $x \colon 1 \to 1$ with the usual NOT gate:
\(
	\tf{not} \text.
\)

Because props are also equipped with a symmetry for the tensor, we can also
obtain controlled operation with a control wire at the bottom.
\[
	\tf{def-control-bottom} \defeq \tf{def-control-bottom-2}
\]
In the rest of the paper, we draw a single wire to picture objects $n$.

\begin{definition}[Controlled prop]
	\label{def:cprop}
	Given an arbitrary controllable prop $(\cat P, +, 0, x)$, we extend it to a
	new prop with endofunctors $C^0$ and $C^1$ such that if $f \colon n \to n$,
	then $C^b(f) \colon 1+n \to 1+n$, and that, for all $n$ and $f, f_1, f_2
	\colon n \to n$, we have equations:
	\begin{enumerate}[(\alph*)]
		\item\label{cprop:composition} composition:
			$C^1(g \circ f) = C^1(g) \circ C^1(f)$;
		\item\label{cprop:identity} identity:
			$C^1(\iid_n) = \iid_{n+1}$;
		\item\label{cprop:strength} strength:
			$C^1(f + \iid_m) = C^1(f) + \iid_m$;
		\item\label{cprop:colour} colour change:
			$(x + \iid_n) \circ C^0(f) \circ (x + \iid_n) = C^1(f)$;
		\item\label{cprop:complementarity} complementarity:
			$C^0(f) \circ C^1(f) = \iid_1 + f$;
		\item\label{cprop:commutativity} commutativity:
			$C^0(f_1) \circ C^1(f_2) = C^1(f_2) \circ C^0(f_1)$;
		\item\label{cprop:swap} ``swap'':
			$C^1(x) \circ \sigma_{1,1} \circ C^1(x) \circ \sigma_{1,1} \circ C^1(x)
			= \sigma_{1,1}$;
		\item\label{cprop:swapcoh} ``swap'' coherence:
			$(\sigma_{1,1} + \iid_n) \circ C^1(C^1(f)) = C^1(C^1(f))
			\circ (\sigma_{1,1} + \iid_n)$.
	\end{enumerate}
	Figure~\ref{fig:control-equations} gives a diagrammatic account of
	the above equations. Write $\ccat P$ for this new prop, that we call
	the \emph{controlled prop} of $\cat P$. Note that $(\ccat P, +, 0, x)$ is
	itself a controllable prop.
\end{definition}

\begin{remark}
	The strength and composition equations above imply that, for all $f
	\colon m \to m$ and $g \colon n \to n$, we have:
	\[
		\tf{monoidal-control-3}
		= \tf{monoidal-control-4}
		= \tf{monoidal-control-5}
	\]
\end{remark}

Throughout the paper, we make extensive use of multi-controlled gates, such as
\[
	\tf{multicontrolled} = \tf{multicontrolled-2}
	\colon
	(\iid_2 + \sigma_{1,1})
	\circ C^1(C^0(C^0(x)))
	\circ (\iid_2 + \sigma_{1,1})\text,
\]
that we denote by $C^{10}_0(x)$ for readability, and which are well-defined
thanks to Equation~\ref{cprop:swapcoh}. The use of the latter equation is
implicit when working with circuits in the rest of the paper, since its role
restricts to diagram coherence for multi-controlled gates. More generally,
given two Boolean words $w$ and $w'$, we write $C^w_{w'}(f)$ for the gate $f$
controlled on $\abs w$ top wires, positively or negatively according to $w$,
and controlled on $\abs{w'}$ bottom wires, according to $w'$.

There are two forms of multiple controlled morphisms. Already in
$\ccat{\cat{P}}$, one can control morphisms from $\cat{P}$ on multiple wires.
In $\ccat{\ccat{\cat{P}}}$, one can additionally formally control morphisms
from $\ccat{\cat{P}}$.
In fact, the assignment $\mathbf{c}-$ is a monad on the category of crops and
crop morphisms. The monad multiplication flattens the latter type of multiple
control, but because the two forms of multiple control are distinct, the monad
is not idempotent. The monad unit maps morphisms to their uncontrolled
counterpart.

\begin{remark}
	In the definition for controllable props (see
	Definition~\ref{def:controllable}), we do not require extra conditions on
	the distinguished involution $x$. Therefore, one could choose it to be the
	identity $\iid_1$ on any prop. This does not break the construction of the
	corresponding controlled prop (see Definition~\ref{def:cprop}), but the
	latter is very degenerate, where the swap of wires is the identity because
	of \ref{cprop:swap} and all operations under the control commute because of
	\ref{cprop:colour} and \ref{cprop:commutativity}.
\end{remark}

\subsection{Relation to permutations}
The simplest example of a controlled prop is already quite meaningful, as it captures
all classical computing. To justify this statement, we prove that the controlled
prop $\ccat X$ of the prop $\cat X$ generated by a single involution is
isomorphic to the category of permutations between finite sets of cardinality a power of $2$.

\begin{theorem}
	\label{th:cx-iso}
	The controlled prop $\ccat X$ is crop isomorphic to $\cat{Perm}_2$.
\end{theorem}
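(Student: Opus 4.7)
The plan is to build a crop morphism $F \colon \ccat X \to \cat{Perm}_2$ that is the identity on objects (with $n \in \ccat X$ interpreted as $[2^n]$ in $\cat{Perm}_2$), and then to show it is bijective on hom-sets. I would define $F$ on generators by $F(x) = \gamma_{1,1}$ and $F(C^1(f)) = \iid_{2^n} \oplus F(f)$ for $f \colon n \to n$, with $F(C^0(f)) = F(f) \oplus \iid_{2^n}$ forced by the colour-change axiom. Well-definedness then reduces to checking each equation of Definition~\ref{def:cprop} as an identity of permutations. Most are routine: composition, identity, commutativity, and colour change are immediate; strength becomes the strict right distributivity that holds in any semisimple bipermutative category; complementarity collapses to $(F(f) \oplus \iid) \circ (\iid \oplus F(f)) = F(f) \oplus F(f) = \iid_2 \otimes F(f)$; and swap coherence follows from the symmetry of double control in its two control wires. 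Only the swap equation~\ref{cprop:swap} requires a genuine calculation, namely the classical fact that three CNOTs and two wire swaps compose to the swap of two wires.

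For fullness, I would use Gray codes. Each multi-controlled NOT $C^w_{w'}(x)$ on $n$ wires maps under $F$ to the Gray transposition $\theta_{n,i}$ swapping the unique two bit strings in $[2^n]$ whose prefix and suffix match $w$ and $w'$ and which differ only on the target wire. Since Proposition~\ref{prop:gray-transposition} expresses each $\theta_{n,i}$ as $r_n^{-1} \circ \tau_i \circ r_n$ for an adjacent transposition $\tau_i$, and the $\tau_i$ generate $\mathrm{Sym}([2^n])$, so do the Gray transpositions. Hence every permutation in $\cat{Perm}_2$ lies in the image of $F$.

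Faithfulness is where I expect the real work, and the main obstacle. My proposal is to construct an inverse $G \colon \cat{Perm}_2 \to \ccat X$ on the generating set of Gray transpositions, sending each $\theta_{n,i}$ to its corresponding multi-controlled NOT, and then to verify that the defining relations among Gray transpositions---pulled back via $r_n$ from the Coxeter presentation $\tau_i^2 = \iid$, the braid relation, and far commutativity---hold in $\ccat X$ among the corresponding multi-controlled NOTs. The recursive structure of Gray transpositions in~\eqref{eq:ind-theta} is meant to drive this: it lets one reduce a relation on $n+1$ wires either directly to one on $n$ wires (when the swap sits within a half) or, via swap coherence~\ref{cprop:swapcoh}, to one involving the straddling case. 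Establishing $G \circ F = \iid$ would then proceed by normalising an arbitrary morphism of $\ccat X$ to a product of multi-controlled NOTs, using composition and strength to push and merge controls and complementarity and commutativity to eliminate cancellations. The genuinely delicate step will be the derivation of braid-type relations between overlapping multi-controlled NOTs from the eight crop axioms; this likely requires a Sleator-Weinfurter-style decomposition as a lemma, and I expect the Gray induction announced in \Cref{sec:rig} to be the natural inductive principle here.
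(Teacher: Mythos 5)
Your proposal follows essentially the same route as the paper: your $F$ and $G$ are exactly the paper's functors $\alpha$ and $\beta$, and the paper likewise verifies well-definedness of $\beta$ by checking the involution, Yang--Baxter, and far-commutativity relations among Gray transpositions using the inductive formula~\eqref{eq:ind-theta}, with the two-wire base case handled directly by~\ref{cprop:swap} together with~\ref{cprop:colour} and~\ref{cprop:complementarity} (no Sleator--Weinfurter lemma needed). The only cosmetic difference is in the final step: rather than normalising morphisms of $\ccat X$, the paper establishes $\beta(\iid_{2^n}\oplus f)=C^1(\beta(f))$ and $\beta(f\oplus\iid_{2^n})=C^0(\beta(f))$, which makes mutual inverseness immediate on generators.
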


To prove this, we define functors in both directions and show that they are
inverse of each other. Let us first define a crop morphism $\alpha \colon \ccat
X \to \cat{Perm}_2$.
\[
	\begin{array}{lcll}
		\alpha \colon & \ccat X & \to & \cat{Perm}_2 \\[1ex]
		& n & \mapsto & 2^n \\[1ex]
		& x & \mapsto & \symplus_{1,1} \\[1ex]
		& \sigma_{m,n} & \mapsto & \symtimes_{2^m,2^n} \\[1ex]
		& C^1(f) & \mapsto & \iid_{2^n} \oplus \alpha(f) \\[1ex]
		& C^0(f) & \mapsto & \alpha(f) \oplus \iid_{2^n} \\[1ex]
		& \iid_m + f & \mapsto & \iid_{2^m} \otimes \alpha(f)
	\end{array}
\]

\begin{lemma}
	\label{lem:alpha-wd}
	The crop morphism $\alpha \colon \ccat X \to \cat{Perm}_2$ is well-defined.
\end{lemma}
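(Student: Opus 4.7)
The plan is to verify that each of the eight control equations (a)--(h) of \Cref{def:cprop} is sent by the assignment $\alpha$ to a true identity in $\cat{Perm}_2$. Since $\alpha$ sends the prop operations of $\ccat X$ to the corresponding bipermutative operations of $\cat{Perm}_2$ (composition to composition, the prop tensor $+$ of $\ccat X$ to $\otimes$ on powers of $2$, $\mathsf{swap}$ to $s$, and the distinguished involution $x$ to the involution $\gamma_{1,1}$), the prop axioms and the relation $x \circ x = \iid$ are preserved automatically. Once the eight equations are verified, $\alpha$ extends uniquely to a well-defined crop morphism by recursion on the definition of morphisms in $\ccat X$.

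Equations \ref{cprop:composition}, \ref{cprop:identity}, and \ref{cprop:commutativity} are immediate from the functoriality of $\oplus$ in $\cat{Perm}$: e.g.\ $\alpha(C^1(g) \circ C^1(f)) = (\iid_{2^n} \oplus \alpha(g)) \circ (\iid_{2^n} \oplus \alpha(f)) = \iid_{2^n} \oplus (\alpha(g) \circ \alpha(f)) = \alpha(C^1(g \circ f))$, and both sides of \ref{cprop:commutativity} reduce to $\alpha(f_1) \oplus \alpha(f_2)$. Equations \ref{cprop:strength} and \ref{cprop:complementarity} follow by a short unfolding using the right distributor of the bipermutative structure: for \ref{cprop:strength}, $(\iid_{2^n} \oplus \alpha(f)) \otimes \iid_{2^m} = \iid_{2^{n+m}} \oplus (\alpha(f) \otimes \iid_{2^m})$; for \ref{cprop:complementarity}, $(\alpha(f) \oplus \iid) \circ (\iid \oplus \alpha(f)) = \alpha(f) \oplus \alpha(f) = \iid_2 \otimes \alpha(f)$. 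Equation \ref{cprop:colour} reduces to the observation that $\gamma_{1,1} \otimes \iid_{2^n}$ swaps the two $2^n$-blocks of $[2^{n+1}]$, conjugating $\alpha(f) \oplus \iid_{2^n}$ into $\iid_{2^n} \oplus \alpha(f)$. Finally, \ref{cprop:swapcoh} is a block-commutation argument: $s_{2,2} \otimes \iid_{2^n}$ transposes only the two middle $2^n$-blocks of $[2^{n+2}]$, whereas $\alpha(C^1(C^1(f))) = \iid_{2^{n+1}} \oplus \iid_{2^n} \oplus \alpha(f)$ acts identically on the middle blocks and nontrivially only on the last one, so the two permutations commute.

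The one equation requiring an explicit calculation is \ref{cprop:swap}, which under $\alpha$ becomes the classical identity that three alternating CNOTs implement a SWAP on two bits. I would verify this by tracing each of the four elements of $[4]$ through the composite $(\iid_2 \oplus \gamma_{1,1}) \circ s_{2,2} \circ (\iid_2 \oplus \gamma_{1,1}) \circ s_{2,2} \circ (\iid_2 \oplus \gamma_{1,1})$, using the formulas for $s_{2,2}$ and $\iid_2 \oplus \gamma_{1,1}$ from \Cref{sec:background}, and comparing against $s_{2,2}$. This is the main, though still elementary, obstacle: unlike the other equations, it is not a direct consequence of bipermutative coherence, but rather reflects the specific interaction between the tensor symmetry $s_{2,2}$ and the controlled-NOT permutation $\iid_2 \oplus \gamma_{1,1}$ in $\cat{Perm}_2$.
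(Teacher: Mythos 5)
Your proposal is correct and follows essentially the same route as the paper: a case-by-case check that $\alpha$ sends each of the eight control equations to an identity in $\cat{Perm}_2$, with matching justifications ((bi)functoriality of $\oplus$ for \ref{cprop:composition}, \ref{cprop:identity}, \ref{cprop:complementarity}, \ref{cprop:commutativity}; distributivity coherence for \ref{cprop:strength}; naturality of $\gamma$ for \ref{cprop:colour}). The only divergences are cosmetic: you verify \ref{cprop:swap} by tracing the four elements of $[4]$ where the paper appeals to naturality of $\gamma$ together with associative coherence, and you explicitly handle \ref{cprop:swapcoh}, which the paper's prose summary omits.
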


The proof consists in showing that any equation that holds in $\ccat X$ also
holds in $\cat{Perm}_2$ after the application of $\alpha$. In summary,
the equations \ref{cprop:composition} and \ref{cprop:identity} hold in
$\cat{Perm}_2$ because $\oplus$ is a (bi)functor; \ref{cprop:strength} follows
by distributivity coherence; \ref{cprop:colour} by naturality of the symmetry
$\symplus$; \ref{cprop:complementarity} and
\ref{cprop:commutativity} by bifunctoriality of $\oplus$; \ref{cprop:swap} by
naturality of $\symplus$ and associative coherence.

We now define a functor $\beta \colon \cat{Perm}_2 \to \ccat X$ which maps a
Gray code transposition $\theta_{n,i}$ to its associated multi-controlled NOT:
\[
	\begin{array}{lcll}
		\beta \colon & \cat{Perm}_2 & \to & \ccat X \\
		& 2^n & \mapsto & n \\
		& \theta_{n,i} & \mapsto & C^{w_n(i)}_{w'_n(i)}(x)
	\end{array}
\]
where $w_n(i)$ and $w'_n(i)$ are respectively the common prefix and suffix of
the $i^{\text{th}}$ word in the Gray code $h_n(i)$. In a Gray code order, two
successive Boolean words have only one bit that differs, for example $1011$ and
$1001$ when $n=4$ and $i=13$. The operation that permutes these two words is a
NOT gate on the third bit, triggered when the first bit is $1$, the second $0$
and the fourth $1$. This particular example corresponds to the multi-controlled
gate $C^{10}_1(x)$ in $\ccat X$:
\[
	\tf{multicontrolled-3}
\]

We use the induction formula for Gray code transpositions (\ref{eq:ind-theta})
and the fact that all permutations are generated by (Gray code) transpositions
to prove that $\beta \colon \cat{Perm}_2 \to \ccat X$ is a well-defined
functor, and is in fact a crop morphism.

\begin{lemma}
	\label{lem:beta-wd}
	The functor $\beta \colon \cat{Perm}_2 \to \ccat X$ is a well-defined crop
	morphism, and
	\[
		\beta(\iid_{2^n} \oplus f) = C^1(\beta(f)) \text,
		\qquad
		\beta(f \oplus \iid_{2^n}) = C^0(\beta(f)) \text,
	\]
	for all morphisms $f \colon 2^n \to 2^n$.
\end{lemma}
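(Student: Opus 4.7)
The plan is to proceed by Gray induction on $n$, simultaneously establishing well-definedness of $\beta$ on morphisms between powers of $2$ up to $2^n$, together with both identities $\beta(\iid_{2^n} \oplus f) = C^1(\beta(f))$ and $\beta(f \oplus \iid_{2^n}) = C^0(\beta(f))$. The base case $n=0$ is vacuous, and at $n=1$ the only nontrivial morphism is $\gamma_{1,1}$, which $\beta$ sends to $x$, so well-definedness reduces to $x \circ x = \iid_1$.

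I would first prove the two $\oplus$-identities, reducing via functoriality to the case where $f$ is a Gray transposition $\theta_{n,j}$. Here I would use the explicit description $\beta(\theta_{n,j}) = C^{w_n(j)}_{w'_n(j)}(x)$ combined with a direct analysis of how the common prefix and suffix evolve under the recursive definition of $h_n$. For the first branch of (\ref{eq:ind-theta}), $\theta_{n,j} \oplus \iid_{2^n}$ is itself a Gray transposition at level $n{+}1$ whose common prefix gains a leading $0$, yielding $\beta(\theta_{n,j} \oplus \iid_{2^n}) = C^0(\beta(\theta_{n,j}))$; dually for $\iid_{2^n} \oplus \theta_{n,j}$ via the third branch, where the prefix gains a leading $1$.

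For well-definedness, I would verify that $\beta$ sends the defining relations among Gray transpositions to valid equations in $\ccat X$. Since $\theta_{n,i} = r_n^{-1} \tau_i r_n$ by \Cref{prop:gray-transposition}, these relations mirror the standard Coxeter relations on adjacent transpositions $\tau_i$. Involutivity $\theta_{n,i}^2 = \iid$ reduces to $x^2 = \iid_1$ via iterated application of \ref{cprop:composition} and \ref{cprop:identity} through the control wrappers. Distant commutativity $\theta_{n,i}\theta_{n,j} = \theta_{n,j}\theta_{n,i}$ for $|i-j| \geq 2$ follows, by using the two $\oplus$-identities to peel off shared leading structure, from compositions of \ref{cprop:commutativity} and colour changes \ref{cprop:colour}.

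The main obstacle will be the braid relation $(\theta_{n,i}\theta_{n,i+1})^3 = \iid$. Adjacent Gray transpositions target distinct bit positions whose control patterns overlap but differ in exactly one bit, so the composite acts non-trivially on only three wires modulo a fixed context of controls. After reducing to this essential three-wire scenario using the $\oplus$-identities and \ref{cprop:strength}, the relation becomes a form of the three-CNOT swap identity encoded precisely by equation \ref{cprop:swap}. Coherence \ref{cprop:swapcoh} handles the fixed controls surrounding the three-wire core, and \ref{cprop:colour} aligns polarities whenever one of the participating bits is negatively controlled. This reduction is the core content of the lemma, and demands careful bookkeeping to ensure all control patterns align correctly throughout.
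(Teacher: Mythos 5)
Your overall strategy---generate $\cat{Perm}_2$ by Gray transpositions, check the involution, distant-commutativity and braid relations in $\ccat X$, and derive the two $\oplus$-identities by tracking how the common prefix of consecutive Gray words acquires a leading $0$ or $1$ under the recursion defining $h_n$ and \eqref{eq:ind-theta}---is essentially the paper's, and your assignment of control equations to relations (\ref{cprop:composition} and \ref{cprop:identity} for involutivity, \ref{cprop:commutativity} for distant commutativity, \ref{cprop:swap} for the braid base case with \ref{cprop:swapcoh} propagating it through the Gray induction, \ref{cprop:colour} for polarities) matches the paper's proof. One small inaccuracy: adjacent Gray transpositions $\theta_{n,i}$ and $\theta_{n,i+1}$ target the two bit positions in which $h_n(i)$, $h_n(i+1)$, $h_n(i+2)$ disagree and share identical controls on the remaining $n-2$ positions, so the essential core of the braid relation lives on \emph{two} wires, not three; it is exactly the two-wire identity \ref{cprop:swap}, up to the polarity adjustments you describe.

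The genuine gap is that you never verify that $\beta$ is a \emph{monoidal} functor, which is part of being a crop morphism: the prop structure on $\cat{Perm}_2$ is $(\otimes,1)$, so one must check $\beta(f\otimes g)=\beta(f)+\beta(g)$ and $\beta(s_{2^m,2^n})=\mathsf{swap}_{m,n}$, and neither follows from functoriality together with your $\oplus$-identities (which concern the direct sum, not the prop tensor). This is where the paper does real work: $\beta(s_{2,2})=\mathsf{swap}_{1,1}$ is obtained from $s_{2,2}=\theta_{2,1}\circ\theta_{2,2}\circ\theta_{2,1}$ and a further application of \ref{cprop:swap}, while $\beta(\iid_2\otimes\theta_{n,i})=\iid_1+\beta(\theta_{n,i})$ is obtained by writing $\iid_2\otimes\theta_{n,i}=(\theta_{n,i}\oplus\iid)\circ(\iid\oplus\theta_{n,i})$, applying the two $\oplus$-identities, and then invoking complementarity \ref{cprop:complementarity} to collapse $C^0(\beta(\theta_{n,i}))\circ C^1(\beta(\theta_{n,i}))$ to $\iid_1+\beta(\theta_{n,i})$. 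Tellingly, \ref{cprop:complementarity} never appears in your proposal, yet it is precisely the equation that powers this missing step (the paper also needs it in the braid base case). Add the monoidality check and the (trivial) preservation of the NOT gate via $\theta_{1,0}=\gamma_{1,1}\mapsto x$; the rest of your plan then goes through essentially as in the paper.
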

\begin{proof}
	We recall that any permutation $2^n \to 2^n$ is generated by Gray code
	transpositions $\theta_{n,i}$, with the equations below, respectively
	referred to as involution, Yang-Baxter~\cite{perketauyang:yangbaxter}, and
	commutativity.
	\[
		\begin{array}{c}
			\forall i < 2^n -1, \theta_{n,i}^2 = \iid_{2^n} \text,
			\qquad
			\forall i < 2^n -2, (\theta_{n,i}\theta_{n,i+1})^3 = \iid_{2^n} \text,
			\\[0.8ex]
			\forall i+1 < j, \theta_{n,i}\theta_{n,j} = \theta_{n,j}\theta_{n,i} \text.
		\end{array}
	\]

	The functor $\beta \colon \cat{Perm}_2 \to \ccat X$ is well-defined: the
	involution equation follows from \ref{cprop:composition} and
	\ref{cprop:identity}; Yang-Baxter is proven by induction, the base case on
	two (qu)bits is a direct consequence of \ref{cprop:swap}, alongside
	\ref{cprop:colour} and \ref{cprop:complementarity}, and the rest follows
	thanks to the induction formula for Gray codes and \ref{cprop:swapcoh}; the
	commutation is an easy consequence of \ref{cprop:commutativity} in
	Definition~\ref{def:cprop}.

	This functor is monoidal: first note that in $\cat{Perm}_2$, we have
	$\symtimes_{2,2} = \theta_{2,1} \circ \theta_{2,2} \circ \theta_{2,1}$ (and we have
	$h_2(1) = 01$, $h_2(2) = 11$ and $h_2(3) = 10$), whose image by $\beta$ is then:
	\begin{align*}
		&\ \beta(\theta_{2,1} \circ \theta_{2,2} \circ \theta_{2,1}) \\
		&= \beta(\theta_{2,1}) \circ \beta(\theta_{2,2}) \circ \beta(\theta_{2,1}) \\
		&= C_1(x) \circ C^1(x) \circ C_1(x) \\
		&\stackrel{\text{(def)}}{=} \sigma_{1,1} \circ C^1(x) \circ \sigma_{1,1} \circ
		C^1(x) \circ \sigma_{1,1} \circ C^1(x) \circ \sigma_{1,1}
		\\
		&\stackrel{\ref{cprop:swap}}{=} \sigma_{1,1} \circ
		\sigma_{1,1} \circ \sigma_{1,1} = \sigma_{1,1}
	\end{align*}
	thus $\beta$ preserves the symmetry of the tensor; additionally, we know that:
	\begin{align*}
		\iid_2 \otimes \theta_{n,i}
		&= \theta_{n,i} \oplus \theta_{n,i} \\
		&= (\theta_{n,i} \oplus \iid_{2^n}) \circ (\iid_{2^n} \oplus \theta_{n,i}) \\
		&\stackrel{(\ref{eq:ind-theta})}{=} \theta_{n+1,i} \circ \theta_{n+1,2^{n+1}-1-i}
	\end{align*}
	and therefore, since $i < 2^n - 2$, we have:
	\begin{align*}
		\beta( \iid_2 \otimes \theta_{n,i})
		&= \beta(\theta_{n+1,i}) \circ \beta(\theta_{n+1,2^{n+1}-1-i}) \\
		&= C^{w_{n+1}(i)}_{w'_{n+1}(i)}(x) \circ C^{w_{n+1}(2^{n+1}-1-i)}_{w'_{n+1}(2^{n+1}-1-i)}(x) \\
		&= C^0 \left(C^{w_n(i)}_{w'_n(i)}(x) \right) \circ C^1 \left(C^{w_n(i)}_{w'_n(i)}(x) \right) \\
		&\stackrel{\ref{cprop:complementarity}}{=} \iid_1 +
		C^{w_n(i)}_{w'_n(i)}(x) = \iid_1 + \beta(\theta_{n,i})
	\end{align*}
	which concludes that the functor $\beta$ is monoidal, since any
	permutations is generated by the $\theta_{n,i}$. Moreover, since we have
	$\theta_{1,0} = \symplus_{1,1}$, and since its image by $\beta$ is $x$, the
	functor $\beta$ preserves the NOT gate. Therefore, $\beta$ is a crop
	morphism.

	Additionally, we prove that for all $f$, we have $\beta(\iid_{2^n}
	\oplus f) = C^1(\beta(f))$. Once again, it suffices to prove it for the
	generators, therefore:
	\begin{align*}
		\beta(\iid \oplus \theta_{n,i})
		&\stackrel{(\ref{eq:ind-theta})}{=} \beta(\theta_{n+1,2^{n+1}-1-i}) \\
		&= C^{w_{n+1}(2^{n+1}-1-i)}_{w'_{n+1}(2^{n+1}-1-i)}(x) \\
		&= C^1 \left(C^{w_n(i)}_{w'_n(i)}(x) \right) \\
		&= C^1(\beta(f))
	\end{align*}
	and similarly, we have $\beta(f \oplus \iid_{2^n}) = C^0(\beta(f))$.
\end{proof}

This implies that $\beta$ is the inverse of $\alpha$, which
suffices to prove Theorem~\ref{th:cx-iso}.

\section{Rigged props thanks to Kronecker}
\label{sec:rig}

As shown in the previous section, in particular in the definition of the
functor $\alpha$, control is linked to the additive tensor $\oplus$, which we
loosely call \emph{direct sum}. This means that if we can extend a `base
circuit theory' with direct sums, we should get its controlled
circuit theory. In fact, the better way is to start from a prop generated by
the direct sum, and then recapture the `base circuit theory' on the tensor
product $\otimes$ from there.

In the context of matrices, the Kronecker product can be computed with direct
sums only. Given a matrix $M$, the matrix $I_n \otimes M$ is obtained as the
block-diagonal matrix
\[
	\left.
		\begin{bmatrix}
			M && O \\
			& \ddots & \\
			O && M
		\end{bmatrix}
	\right\}
	n\text{ times}\text,
\]
or more precisely, $M \oplus \dots \oplus M$. Additionally, the
symmetry for the tensor product $\otimes$ can be expressed in terms of composition
of symmetries for the direct sum. This shows that any theory around the direct
sum can simulate the one of a tensor product, given some coherence. One of
these is the bifunctorality of the tensor product, or in pictures, the
following equality:
\[
	\tf{bifunctor}
	=
	\tf{bifunctor-2}
\]
In a bipermutative category, with $f \colon k \to l$ and $g \colon m \to
n$, we have:
\[
	\begin{array}{ll}
		f \otimes g
		&= (f \otimes \iid_n) \circ (\iid_k \otimes g) \\[.8ex]
		&= \symtimes_{n,l} \circ (\iid_n \otimes f) \circ \symtimes_{k,n} \circ (\iid_k \otimes g) \\[.8ex]
		&= \symtimes_{n,l} \circ (\underbrace{f \oplus \dots \oplus f}_{n\text{
			times}}) \circ \symtimes_{k,n} \circ (\underbrace{g \oplus \dots \oplus
		g}_{k\text{ times}}) \\[2.5ex]
		&= (\iid_l \otimes g) \circ (f \otimes \iid_m) \\[.8ex]
		&= (\iid_l \otimes g) \circ \symtimes_{m,l} \circ (\iid_m \otimes f) \circ \symtimes_{k,m} \\[.8ex]
		&= (\underbrace{g \oplus \dots \oplus g}_{l\text{ times}}) \circ
		\symtimes_{m,l} \circ (\underbrace{f \oplus \dots \oplus f}_{m\text{ times}})
		\circ \symtimes_{k,m} \\[.8ex]
	\end{array}
\]
To reflect this in a tensor product $\otimes$ generated by a direct sum
$\oplus$, the last line above needs to be enforced, leading to our definition of
rigged prop below.

\subsection{Rigged props}

We are now able to define rigged props as described above: a new prop whose
monoidal structure is meant as a direct sum. A prop standardly admits a
presentation with a set of generators and a set of
equations~\cite{baezetal:props}, similarly to groups. The generators of the
prop we construct are the ones from the base prop, for which we adjust the
domains and codomains to a matrix point of view. The relations for this new
prop are the ones of the base prop, in addition to the
relations~\eqref{eq:mon-tensor} that ensure that the newly obtained tensor
product is bifunctorial.

\begin{definition}[Rigged prop]
	\label{def:kprop}
	Given a prop $(\cat P,+,0)$ with a set of generators $G$, and a set of
	relations $R$, its \emph{rigged prop} $(\kcat P,\oplus, 0)$ is the prop
	generated by
	\begin{equation}\label{eq:kgen}
		\widetilde G = \left\{ \widetilde g \colon 2^k \to 2^l \alt g \colon k \to l \in G \right\}
	\end{equation}
 with equations
	\begin{align}
		\label{eq:mon-tensor}
		\widetilde R =~&
		\left\{ \widetilde{f + \iid} \circ \widetilde{\iid + h} = \widetilde{\iid + h} \circ
		\widetilde{f + \iid} \alt f,h \in G \right\} \\
		\label{eq:base-rel}
		&\cup
		\left\{ \widetilde f = \widetilde h \alt (f = h) \in R \right\}
	\end{align}
	where $\widetilde \cdot$ is defined as follows for $f \colon k \to l$, $g \colon l \to m$,
	and $h \colon m \to n$.
	\begin{align}
		\widetilde{\iid_n} &= \iid_{2^n} \\
		\widetilde{g \circ f} &= \widetilde g \circ \widetilde f \\
		\label{eq:new-tensor}
		\widetilde f \otimes \widetilde h \defeq \widetilde{f + h} &= \symtimes_{2^n, 2^l} {\circ}
		(\underbrace{\widetilde f \oplus \widetilde f \oplus \cdots \oplus \widetilde f}_{2^n\text{
			times}}) {\circ} \symtimes_{2^k,2^n} {\circ} (\underbrace{\widetilde h \oplus \widetilde h
		\oplus \cdots \oplus \widetilde h}_{2^k\text{ times}})
	\end{align}
\end{definition}

\begin{theorem}
	\label{th:krig}
	Given a prop $\cat P$, its rigged prop $\kcat P$ is semisimple
	bipermutative.
\end{theorem}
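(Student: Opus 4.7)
\medskip
\noindent\textbf{Proof plan.}
The first paragraph of the proof will extend the tensor product $\otimes$ from the generators $\widetilde g$ to all of $\kcat P$. On objects, we set $2^k \otimes 2^l = 2^{k+l}$, which is multiplication of the sizes because the objects of $\kcat P$ are all of the form $2^n$ up to addition. On morphisms we proceed by induction on term structure: for a generator the tensor is given directly by \eqref{eq:new-tensor}, and we extend by demanding that $\iid \otimes (-)$ and $(-) \otimes \iid$ commute with $\oplus$ and $\circ$ in the expected way, using the two equal sides of \eqref{eq:new-tensor} (conjugating by the appropriate permutations) as the definition of $f \otimes g$ in general. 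The first obligation is then well-definedness: I need to check that this extension respects each relation in $R$ (equivalently $\widetilde R$ via \eqref{eq:base-rel}) and the standard prop equations for $\oplus$. This is essentially a diagram chase routed through the permutations $s_{m,n}$, which are themselves defined purely from $\oplus$ and $\gamma^\oplus$ as in \eqref{eq:kron}.

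The second paragraph will establish that $(\kcat P, \otimes, 1)$ is a strict symmetric monoidal category. Identity preservation $\iid_{2^k} \otimes \iid_{2^l} = \iid_{2^{k+l}}$ is immediate from \eqref{eq:new-tensor} and the prop axioms for $\oplus$. The delicate point is functoriality
\[
(f' \circ f) \otimes (h' \circ h) = (f' \otimes h') \circ (f \otimes h),
\]
which unfolds to an interchange between horizontal composition and direct sum; this is exactly what the relations \eqref{eq:mon-tensor} enforce at the level of generators, and the general case then follows by induction once one uses the identity
\[
\widetilde{f + h} = (\widetilde f \otimes \iid) \circ (\iid \otimes \widetilde h) = (\iid \otimes \widetilde h) \circ (\widetilde f \otimes \iid),
\]
both equalities being instances of \eqref{eq:new-tensor} after applying a permutation conjugation and using \eqref{eq:mon-tensor}. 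The symmetry $s^\otimes_{2^m,2^n}$ is defined from $\oplus$ and $\gamma^\oplus$ by the formula from $\cat{Perm}$; its naturality and the hexagon axioms reduce to identities of permutations, which hold in $\kcat P$ because $\kcat P$ contains the free symmetric monoidal structure on $\oplus$.

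The third paragraph will verify the strict rig axioms. Strict right distributivity $(A \oplus B) \otimes C = (A \otimes C) \oplus (B \otimes C)$ and the nullary distributor equalities $0 \otimes A = 0 = A \otimes 0$ hold on objects by the arithmetic $(a+b)\cdot c = ac + bc$ and $0 \cdot a = 0$, and on morphisms they follow directly from \eqref{eq:new-tensor}, which is built by taking several copies of a morphism under $\oplus$ and conjugating by a symmetry. Left distributivity is a consequence of right distributivity together with $s^\otimes$, so it is automatically strict modulo the symmetry. It then remains to check the three coherence conditions of bipermutative categories (see the appendix cited in the paper): each of these reduces to an identity between two permutations built from $\oplus$, $\gamma^\oplus$, and the $s_{m,n}$, which already holds in $\cat{Perm}$ and therefore holds in $\kcat P$ for the same reason the hexagon did.

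I expect the main obstacle to be the well-definedness in the first paragraph: one must argue that the inductive extension of $\otimes$ does not depend on the way a morphism is decomposed into generators, composition, and $\oplus$. The cleanest way I would handle this is to factor the proof through $\cat{Perm}$ via the observation that, up to the generators $\widetilde g$, the structural part of every morphism is a permutation; then the ambiguity in defining $\otimes$ on a composite is controlled precisely by the interchange relations \eqref{eq:mon-tensor} together with the naturality of $s_{m,n}$ with respect to $\oplus$, and hence vanishes. Once this hurdle is cleared, the remaining bipermutative axioms are formal.
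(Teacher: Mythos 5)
Your plan follows essentially the same route as the paper's proof: both define $\otimes$ on $\kcat P$ by transporting the formula from $\cat{Perm}$ (the free prop) via $\oplus$ and the permutation symmetries, derive bifunctoriality from the interchange relations \eqref{eq:mon-tensor}, obtain naturality of the tensor symmetry from \eqref{eq:new-tensor}, and inherit the bipermutative coherence conditions from $\cat{Perm}$. Your version is merely more explicit about the well-definedness of the inductive extension, an obligation the paper's terse proof leaves implicit.
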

 \begin{proof}
	There is a prop morphism $E \colon \cat{Perm} \to \kcat P$ since
	$\cat{Perm}$ is the free prop. The $\otimes$ monoidal structure on $\kcat
	P$ can then be defined the same way as in $\cat{Perm}$. The tensor
	$\otimes$ in $\kcat P$ is indeed a bifunctor, mainly thanks to
	(\ref{eq:mon-tensor}). Equation~\eqref{eq:new-tensor} ensures that the
	symmetry of the tensor $\oplus$ is natural. The coherence equations are
	inherited from $\cat{Perm}$ through $E$. Therefore, $\kcat P$ is
	semisimple bipermutative.
 \end{proof}

This construction $\kcat \cdot$ provides not merely a semisimple bipermutative
category, but the universal one.

\begin{theorem}
	\label{th:kuniv}
	Given a prop $\cat P$ and a strict monoidal functor to a semisimple
	bipermutative category $F \colon \cat P \to \cat Q$ such that $F(1) = 2$,
	there is a unique strict bipermutative functor $\widetilde F \colon \kcat P
	\to \cat Q$ such that the following diagram commutes:
	\[
		\begin{tikzcd}[column sep=large]
			\cat P & \kcat P \\
			& \cat Q
			\arrow[hook, from=1-1, to=1-2]
			\arrow["F"', from=1-1, to=2-2]
			\arrow[dashed, "\widetilde F", from=1-2, to=2-2]
		\end{tikzcd}
	\]
\end{theorem}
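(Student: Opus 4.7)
The plan is to use the presentation of $\kcat P$ as a prop with generators $\widetilde G$ and relations $\widetilde R$: define $\widetilde F$ on generators in the forced way, extend freely, and then verify that the relations are preserved, that the result is strict bipermutative, that the triangle commutes, and that it is unique.

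On objects, strict preservation of $\oplus$, $\otimes$, and their units forces $\widetilde F$ to be the identity on $\mathbb N$. On generators, set $\widetilde F(\widetilde g) := F(g) \colon 2^k \to 2^l$ for each $g \colon k \to l$ in $G$; this is well-typed because $F$ is strict monoidal with $F(1) = 2$, so $F(n) = 2^n$. Extend freely to a prop morphism from the free prop on $\widetilde G$ (with structure $\oplus$) into $(\cat Q, \oplus)$, preserving composition, identities, and $\oplus$-symmetries.

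To descend to the quotient by $\widetilde R$, prove by structural induction on morphisms $f$ of $\cat P$ that $\widetilde F(\widetilde f) = F(f)$. The nontrivial case is $f = g + h$: formula~(\ref{eq:new-tensor}) in $\kcat P$ expresses $\widetilde{g+h}$ explicitly as a composite of $\widetilde g$'s, $\widetilde h$'s, and $\oplus$-symmetries. Applying $\widetilde F$ yields the same composite in $\cat Q$ with $F(g)$ and $F(h)$ in place of $\widetilde g$ and $\widetilde h$; by~(\ref{eq:kron}), which holds in any semisimple bipermutative category, this equals $F(g) \otimes F(h) = F(g+h)$. With this in hand, relations of type~(\ref{eq:base-rel}) are preserved by functoriality of $F$, and relations of type~(\ref{eq:mon-tensor}) are preserved by bifunctoriality of $\otimes$ in $\cat Q$.

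Strict bipermutativity then follows: preservation of $\oplus$ is by construction, and preservation of $\otimes$ reduces to preservation of $\oplus$ through~(\ref{eq:kron}). The triangle commutes on objects ($\widetilde F(2^n) = 2^n = F(n)$) and on morphisms ($\widetilde F(\widetilde f) = F(f)$, as just shown). For uniqueness, any other strict bipermutative functor $\widetilde F'$ making the triangle commute must satisfy $\widetilde F'(\widetilde g) = F(g)$ on generators, pinning it down everywhere via the presentation. The main obstacle is the inductive step for the tensor: reconciling the imposed rig formula~(\ref{eq:new-tensor}) in $\kcat P$ with the one derived from bipermutativity~(\ref{eq:kron}) in $\cat Q$. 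This is precisely what the relations~(\ref{eq:mon-tensor}) are designed to ensure, so the gears mesh.
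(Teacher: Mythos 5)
Your proposal is correct and follows essentially the same route as the paper: define $\widetilde F$ on the generators $\widetilde G$ by $\widetilde F(\widetilde g) = F(g)$, check that the relations $\widetilde R$ are preserved (the base relations via $F$, the bifunctoriality relations via the rig structure of $\cat Q$), and deduce uniqueness from the presentation together with the fact that a strict bipermutative functor between semisimple bipermutative categories is identity-on-objects. The paper's own proof is much terser and leaves the verification that \eqref{eq:new-tensor} is compatible with \eqref{eq:kron} implicit; your structural induction on $f = g + h$ simply makes that step explicit.
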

\begin{proof}
	Let us first define the resulting functor. Since $F(1)=2$, we have
	$F(n) = 2^n$. Thus we can define:
	\[
		\begin{array}{c}
			\widetilde F (n) = n
			\qquad
			\widetilde F(\widetilde g) = F(g)
		\end{array}
	\]
	and this suffices to define the functor $\widetilde F$. All the equations
	in $\widetilde P$ derive from $P$ or from the rig structure, and therefore
	$\widetilde F$ is well-defined because it is strict rig and because $F$ is
	well-defined.
	A strict rig functor between two semisimple bipermutative categories is
	necessarily the identity on objects. Since the functor is obtained by the image
	of $F$, it is necessarily unique.
\end{proof}

The domains and codomains of the generators for $\kcat P$, in \eqref{eq:kgen},
can in fact be the power of any arbitrary natural number. We choose $2$ to fit
the story of circuits on bits. Moreover, the definition of rigged prop does not
account for the NOT gate, which plays an central role in our theory of control.
This is the focus of the next section.

\subsection{Rigged crops}

We now present a definition of rigged props that takes into account the choice
of NOT gate. Whether it is classical reversible circuits or quantum circuits,
the NOT gate is represented as the following matrix:
\[
	\begin{bmatrix}
		0 & 1 \\
		1 & 0
	\end{bmatrix}
\]
which is also the symmetry $\symplus_{1,1}$ for the direct sum $\oplus$. Given a
crop with a choice of NOT gate, we further identify the chosen NOT gate with
the symmetry $\symplus_{1,1}$ in its constructed rigged prop, to obtain its
rigged crop.

\begin{definition}[Rigged crop]
	Given a controllable prop $(\cat P, +, 0, x)$ with a set of generators $G$,
	and a set of relations $R$, its \emph{rigged crop} $(\scat P, \oplus, 0)$
	is the prop generated by
	\[
		\ovl G = \widetilde G
	\]
	subject to equations
	\[
		\ovl R =
		\left\{ \widetilde x = \symplus_{1,1} \right\}
		\cup
		\widetilde R\text.
	\]
\end{definition}

The rigged crop is semisimple bipermutative in the same way as the
rigged prop.

\begin{theorem}
	The category $(\scat P, \otimes, 1, \oplus, 0)$ is semisimple
	bipermutative.
\end{theorem}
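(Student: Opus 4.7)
The plan is to adapt the proof of Theorem~\ref{th:krig} with minimal modification. The rigged crop $\scat P$ differs from the rigged prop $\kcat P$ only in that one additional equation is imposed, namely $\widetilde x = \gamma_{1,1}$. I would therefore present $\scat P$ concretely as the quotient of $\kcat P$ by this single equation, yielding an identity-on-objects prop morphism $q \colon \kcat P \to \scat P$ along which the bipermutative structure will be pushed forward.

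First, I would confirm that the objects of $\scat P$ are the natural numbers and that the $\oplus$ monoidal structure on them is given by addition; both hold by construction, since $\ovl G = \widetilde G$ and the underlying prop structure is generated the same way as in $\kcat P$. Second, I would transport the $\otimes$ structure from $\kcat P$ along $q$: since $\widetilde R \subseteq \ovl R$ and the formula~\eqref{eq:new-tensor} for $\widetilde f \otimes \widetilde h$ remains available, bifunctoriality of $\otimes$ and naturality of its symmetry both continue to hold, and $\otimes$ on objects remains multiplication of natural numbers. Equivalently, one composes the prop morphism $E \colon \cat{Perm} \to \kcat P$ used in the proof of Theorem~\ref{th:krig} with $q$, obtaining a prop morphism $\cat{Perm} \to \scat P$ that carries over the multiplicative monoidal structure and its symmetry. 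Third, I would note that all bipermutative coherence equations, which in Theorem~\ref{th:krig} are inherited from $\cat{Perm}$ via $E$, survive this further quotient unchanged, since the quotient only identifies two existing morphisms $2 \to 2$ in $\kcat P$ and does not touch any coherence isomorphism or any object.

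The main, rather mild, obstacle is to verify that identifying $\widetilde x$ with $\gamma_{1,1}$ does not degenerate any of the bipermutative structure. Since both morphisms already live in $\kcat P$ with matching type, and since no coherence axiom of a bipermutative category forces $\gamma_{1,1}$ to be distinct from any specific other endomorphism of $2$, the quotient is well-behaved. Consequently, $\scat P$ inherits the full semisimple bipermutative structure from $\kcat P$, completing the proof.
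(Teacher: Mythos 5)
Your proposal is correct and is essentially the paper's own argument: the paper gives no separate proof, stating only that the rigged crop is semisimple bipermutative ``in the same way as the rigged prop,'' i.e.\ by rerunning the proof of Theorem~\ref{th:krig}, which is exactly what your presentation of $\scat P$ as the quotient of $\kcat P$ by $\widetilde x = \gamma_{1,1}$ amounts to. (Your worry about the extra identification ``degenerating'' the structure is moot in any case, since a prop quotient compatible with $\oplus$ and the derived $\otimes$ would remain semisimple bipermutative even if it collapsed further morphisms.)
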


Furthermore, rigged crops enjoy a similar universality as rigged props.

\begin{theorem}
	\label{th:suniv}
	Given a controllable prop $(\cat P, +, 0, x)$, and a strict monoidal
	functor $F \colon \cat P \to \cat Q$ to a semisimple bipermutative category
	with $F(x) = \symplus_{1,1}$, there is a unique strict rig functor $\ovl F
	\colon \scat P \to \cat Q$ such that the following diagram commutes.
	\[
		\begin{tikzcd}[column sep=large]
			\cat P & \scat P \\
			& \cat Q
			\arrow[from=1-1, to=1-2]
			\arrow["F"', from=1-1, to=2-2]
			\arrow[dashed, "\ovl F", from=1-2, to=2-2]
		\end{tikzcd}
	\]
\end{theorem}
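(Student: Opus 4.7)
The plan is to piggy-back on the universal property of the rigged prop $\kcat P$ established in Theorem~\ref{th:kuniv}, observing that $\scat P$ is literally $\kcat P$ with the single additional relation $\widetilde x = \gamma_{1,1}$.

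First, I would note that the hypothesis $F(x) = \gamma_{1,1}$ automatically forces $F(1) = 2$: since $x \colon 1 \to 1$ lives in $\cat P$ while $\gamma_{1,1}$ is the additive symmetry $1 \oplus 1 \to 1 \oplus 1$, i.e.\ $2 \to 2$, in the semisimple bipermutative category $\cat Q$, matching source and target gives $F(1) = 2$. This puts us in the setting of Theorem~\ref{th:kuniv}, from which I obtain a unique strict rig functor $\widetilde F \colon \kcat P \to \cat Q$ that extends $F$ along $\cat P \embed \kcat P$.

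Next, I would show that $\widetilde F$ respects the additional relation $\widetilde x = \gamma_{1,1}$ defining $\scat P$. The left-hand side maps to $\widetilde F(\widetilde x) = F(x) = \gamma_{1,1}$ by hypothesis, and the right-hand side maps to $\widetilde F(\gamma_{1,1}) = \gamma_{1,1}$ because $\widetilde F$ is strict rig and thus preserves the additive symmetry of the semisimple bipermutative target. Hence $\widetilde F$ descends along the canonical quotient $\kcat P \twoheadrightarrow \scat P$ to the desired strict rig functor $\ovl F \colon \scat P \to \cat Q$. The triangle commutes because $\widetilde F$ does, and the embedding $\cat P \embed \scat P$ factors as $\cat P \embed \kcat P \twoheadrightarrow \scat P$.

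Uniqueness follows from the uniqueness clause of Theorem~\ref{th:kuniv}: any competing strict rig functor $\ovl F' \colon \scat P \to \cat Q$ extending $F$ composes with the quotient to give a strict rig functor $\kcat P \to \cat Q$ also extending $F$, which must equal $\widetilde F$, and hence $\ovl F' = \ovl F$ since the quotient is surjective on morphisms. I do not expect any real obstacle in this argument: the heavy lifting was done in Theorem~\ref{th:kuniv}, and the only substantive thing to record is that the single extra defining relation of $\scat P$ is forced upon us by the very hypothesis $F(x) = \gamma_{1,1}$.
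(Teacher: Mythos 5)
Your proposal is correct and follows exactly the route the paper intends: the paper omits an explicit proof of Theorem~\ref{th:suniv}, remarking only that rigged crops ``enjoy a similar universality as rigged props,'' i.e.\ that the statement reduces to Theorem~\ref{th:kuniv}. Your write-up supplies precisely the missing details --- that $F(x)=\gamma_{1,1}$ forces $F(1)=2$, that $\widetilde F$ respects the one extra relation $\widetilde x=\gamma_{1,1}$ and so descends along the quotient $\kcat P \twoheadrightarrow \scat P$, and that uniqueness follows from surjectivity of that quotient --- so it matches the paper's argument.
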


In the following, we are interested in objects in $\scat P$ that are powers of
$2$. 

\begin{corollary}
	\label{cor:suniv}
	Given a controllable prop $(\cat P, +, 0, x)$, and a strict monoidal
	functor $F \colon \cat P \to \cat Q$ to a semisimple bipermutative category
	with $F(x) = \symplus_{1,1}$, there is a unique prop morphism $\ovl F_2
	\colon \scat P_2 \to \cat Q_2$ such that $\ovl F_2$ preserves the NOT gate and
	the following diagram commutes.
	\[
		\begin{tikzcd}[column sep=large]
			\cat P & \scat P_2 \\
			& \cat Q_2
			\arrow[from=1-1, to=1-2]
			\arrow["F"', from=1-1, to=2-2]
			\arrow[dashed, "\ovl F_2", from=1-2, to=2-2]
		\end{tikzcd}
	\]
\end{corollary}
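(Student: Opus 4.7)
The plan is to deduce the corollary directly from Theorem~\ref{th:suniv} by restricting the universal rig functor to powers of $2$. First I would invoke Theorem~\ref{th:suniv} on the given data $(\cat P, F)$ to obtain the unique strict rig functor $\ovl F \colon \scat P \to \cat Q$ extending $F$. Since $\ovl F$ is strict rig between semisimple bipermutative categories and $\ovl F(1) = 1$, it sends $m \otimes n$ to $m \cdot n$ and hence takes powers of $2$ to powers of $2$. By Proposition~\ref{prop:powers}, the full subcategories $\scat P_2$ and $\cat Q_2$ are props under $(\otimes, 1)$, so the restriction $\ovl F_2 \defeq \ovl F|_{\scat P_2}$ is a well-defined identity-on-objects strict symmetric monoidal functor with respect to $\otimes$, namely a prop morphism $\scat P_2 \to \cat Q_2$.

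Next I would check the two remaining conditions. The NOT gate is identified with $\gamma_{1,1}$ in $\scat P$ by construction of the rigged crop, and with $\gamma_{1,1}$ in $\cat Q$ by the hypothesis $F(x) = \gamma_{1,1}$; since strict rig functors preserve the symmetry of $\oplus$, $\ovl F_2$ preserves the NOT gate. The commutativity of the triangle in the corollary follows by corestricting the commutative triangle of Theorem~\ref{th:suniv}: the embedding $\cat P \embed \scat P$ factors through $\scat P_2$ because the generators $\widetilde g \colon 2^k \to 2^l$ have objects that are powers of $2$, and similarly $F$ factors through $\cat Q_2$ because $F(n) = 2^n$.

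For uniqueness, suppose $G \colon \scat P_2 \to \cat Q_2$ is any prop morphism preserving the NOT gate and making the triangle commute. Composing with the inclusion $\cat Q_2 \hookrightarrow \cat Q$ and noting that $G$ is determined on the generators $\widetilde g$ (which lie in $\scat P_2$) by the commuting triangle, I would extend $G$ to a strict rig functor $\scat P \to \cat Q$ using the formula (\ref{eq:new-tensor}) that expresses $\otimes$ in terms of $\oplus$ and symmetries; this extension is forced and agrees with $F$ on $\cat P$, so by the uniqueness clause of Theorem~\ref{th:suniv} it equals $\ovl F$. Restricting back yields $G = \ovl F_2$.

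No step looks like a genuine obstacle: the content is essentially a restriction argument, and the only subtlety is making sure the extension-and-restriction in the uniqueness step is legitimate, which amounts to observing that a prop morphism $\scat P_2 \to \cat Q_2$ preserving NOT together with the universal property (\ref{eq:new-tensor}) of the generators already pins down its values on all generators $\widetilde g$ of $\scat P$, while the rig equations are inherited automatically.
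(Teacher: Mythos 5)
Your existence argument is exactly the intended one: the paper states this corollary without proof, and the implicit route is precisely to restrict the unique strict rig functor $\ovl F$ of Theorem~\ref{th:suniv} to the full subcategories on powers of $2$, which are props by Proposition~\ref{prop:powers}; the corestriction of the embedding and of $F$ (using $F(n)=2^n$), and the preservation of the NOT gate because it equals $\gamma_{1,1}$ on both sides and strict rig functors preserve the $\oplus$-symmetry, are all as you describe.

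The gap is in uniqueness. You claim that a prop morphism $G \colon \scat P_2 \to \cat Q_2$ preserving the NOT gate and making the triangle commute is ``pinned down on all generators $\widetilde g$ of $\scat P$'' and hence forced to agree with $\ovl F_2$. The triangle does pin down $G$ on the image of $\cat P$, on the NOT gate, and (being a prop morphism for $(\otimes,1)$) on the symmetries $s_{2^m,2^n}$. But $\scat P_2$ is \emph{not} generated under composition and $\otimes$ by these morphisms alone: it also contains the direct-sum combinations such as $\iid_{2^n} \oplus \widetilde g$, i.e.\ the controlled gates $C^1(g)$ under the isomorphism $\scat P_2 \cong \ccat P$ of Theorem~\ref{th:cp-iso}, and a prop morphism with respect to $\otimes$ is not assumed to preserve the restricted $\oplus$. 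Note that formula \eqref{eq:new-tensor} gives $\iid_2 \otimes \widetilde g = \widetilde g \oplus \widetilde g$, not $\iid \oplus \widetilde g$, so it provides no handle on singly-controlled gates. Consequently your extension-and-restriction step does not go through as written: you can extend the forced values of $G$ on the $\widetilde g$ to a strict rig functor on $\scat P$ (necessarily $\ovl F$, by Theorem~\ref{th:suniv}), but you have not shown that $G$ itself agrees with the restriction of that extension on the controlled morphisms. To close this you would need either to strengthen the hypothesis on $G$ (require it to preserve the control operators, equivalently the restriction of $\oplus$, as in Definition~\ref{def:cprop}), or to argue that the control equations force $G(\iid_{2^n}\oplus\widetilde g)$ from $G(\widetilde g)$ --- complementarity and colour change constrain these values but do not obviously determine them. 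As it stands, the uniqueness clause is asserted rather than proved.
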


\subsection{Relation to permutations}
Observe that, similarly to the controlled prop on a single involution, the
rigged crop on a single involution is the category of permutations.

\begin{theorem}
	The category $\scat X$ is exactly the category of permutations.
\end{theorem}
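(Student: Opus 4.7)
The plan is to construct mutually inverse strict rig functors between $\scat X$ and $\cat{Perm}$, using the universal properties of both categories.

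In one direction, since $\cat X$ is freely generated (as a prop) by a single involution and $\gamma_{1,1}^{\cat{Perm}} \colon 2 \to 2$ is an involution in the semisimple bipermutative category $\cat{Perm}$, there is a unique strict monoidal functor $F \colon \cat X \to \cat{Perm}$ with $F(1) = 2$ and $F(x) = \gamma_{1,1}^{\cat{Perm}}$. By Theorem~\ref{th:suniv}, $F$ extends to a strict rig functor $\ovl F \colon \scat X \to \cat{Perm}$, which is necessarily the identity on objects (as strict rig functors between semisimple bipermutative categories must preserve the unit of $\otimes$ and the addition of objects).

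In the other direction, the preceding theorem ensures $\scat X$ is semisimple bipermutative, and $\cat{Perm}$ is the free such category, so there is a unique strict rig functor $G \colon \cat{Perm} \to \scat X$. It is identity on objects and preserves all structural rig morphisms; in particular $G(\gamma_{1,1}^{\cat{Perm}}) = \gamma_{1,1}^{\scat X}$, which by the defining relation $\widetilde x = \gamma_{1,1}$ of $\scat X$ equals $\widetilde x$.

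To conclude, I verify that these functors are mutually inverse. For $\ovl F \circ G = \id_{\cat{Perm}}$: both sides are strict rig endofunctors of $\cat{Perm}$ that are identity on objects, hence equal by freeness of $\cat{Perm}$ as a (semisimple) bipermutative category. For $G \circ \ovl F = \id_{\scat X}$: let $\iota \colon \cat X \embed \scat X$ denote the canonical embedding, which satisfies $\iota(x) = \widetilde x = \gamma_{1,1}^{\scat X}$ and so, by Theorem~\ref{th:suniv}, admits $\id_{\scat X}$ as its unique strict rig extension. I would then check that $G \circ \ovl F$ is another such extension by tracing the generator: $x \mapsto \widetilde x \mapsto \gamma_{1,1}^{\cat{Perm}} \mapsto \gamma_{1,1}^{\scat X} = \widetilde x$, matching $\iota(x)$; on objects the composite $n \mapsto 2^n \mapsto 2^n \mapsto 2^n$ also matches $\iota$. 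Uniqueness then gives $G \circ \ovl F = \id_{\scat X}$.

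The only subtlety lies in ensuring that $\scat X$ adds no content beyond that of $\cat{Perm}$. This is captured entirely by the relation $\widetilde x = \gamma_{1,1}$ in $\ovl R$, which collapses the unique non-structural generator of $\scat X$ into a structural symmetry already present in every semisimple bipermutative category; all other equations in $\ovl R$ are either rig coherences or consequences of $x$ being an involution (which is already satisfied by $\gamma_{1,1}$). Hence the rigged crop construction on $\cat X$ produces nothing beyond the free bipermutative category $\cat{Perm}$.
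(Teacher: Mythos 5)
Your proof is correct, but it takes a genuinely different route from the paper's. The paper works concretely with generators and relations: it sends each transposition $\tau_i$ to the block involution $\iid \oplus \gamma_{1,1} \oplus \iid$ in $\scat X$, verifies the symmetric-group presentation (far-commutation and Yang--Baxter, from naturality of $\gamma$ and associativity coherence), and then concludes that the resulting functor $\cat{Perm} \to \scat X$ is an isomorphism by observing that both categories contain only coherence morphisms. You instead argue entirely by universal properties: Theorem~\ref{th:suniv} produces $\ovl F \colon \scat X \to \cat{Perm}$ from $F(x) = \gamma_{1,1}$, initiality of $\cat{Perm}$ among semisimple bipermutative categories produces $G$ in the other direction, and the two uniqueness clauses force both composites to be identities. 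Your route is more modular and, on the decisive point, tighter: the paper's closing claim that $\scat X$ ``only contains coherence morphisms'' is precisely what needs justification and is asserted rather than derived, whereas in your argument it falls out of the mutual-inverse pair. The trade-off is that you import two facts wholesale --- Theorem~\ref{th:suniv} (stated in the paper without proof) and the background assertion that $\cat{Perm}$ is the free, i.e.\ initial, semisimple bipermutative category --- while the paper's hands-on verification is closer to self-contained. Your generator-tracing step establishing $(G \circ \ovl F) \circ \iota = \iota$ is the load-bearing one and is handled correctly, since $\cat X$ is generated as a prop by the single involution $x$.
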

\begin{proof}
	Permutations are generated by transpositions $\tau_i = (i \;\; i+1)$, which can
	be mapped in $\scat X$ to the involutions $\iid_i \oplus \symplus_{1,1} \oplus \iid_{n-3-i}$.
	The latter commute if they act on separate parts of the
	object. Moreover, the Yang-Baxter equation $(\tau_i\;\; \tau_{i+1})^3	= \iid$ follows from naturality of $\symplus$ and associative coherence.
	Thus the functor $\cat{Perm} \to \scat{X}$ is well-defined and the identity on objects.
 Because it
	maps coherence morphisms to coherence morphisms, it is in fact a strict
	bipermutative functor. As $\scat X$, just like $\cat{Perm}$, only
	contains coherence morphisms, the functor is an isomorphism.
\end{proof}

We now redefine the functors $\alpha \colon \ccat X \to \tcat X$ and $\beta
\colon \tcat X \to \ccat X$ to fit the topic at hand.

\begin{remark}
	In principle, the notion of control makes sense for any prop consisting
	only of endomorphisms. However, we focus here on groupoids, since this is
	where controlled gates originated. In the more general, irreversible
	setting, control certainly makes sense, though more space-efficient gadgets
	such as multiplexers are typically used instead. Note that none of the
	control equations imply reversibility.
\end{remark}

\subsection{Rig is control}
\label{sec:finale}

In this section, we fix a controllable prop $(\cat P, +, 0, x)$. It is now
a matter of proving that $\ccat P$ and $\scat P_2$ are isomorphic. First let
us define a functor $\ccat P$ to $\scat P_2$.
\[
	\begin{array}{lrcl}
		A \colon & \ccat P & \to & \scat P_2 \\
		& n & \mapsto & 2^n \\
		& f \in \ccat X & \mapsto & \alpha(f) \\
		& g \in G & \mapsto & \ovl g \\
		& C^1(f) & \mapsto & \iid_{2^n} \oplus A(f) \\
		& C^0(f) & \mapsto & A(f) \oplus \iid_{2^n} \\
		& \iid_m + f & \mapsto & \iid_{2^m} \otimes A(f)
	\end{array}
\]

\begin{lemma}\label{lem:a-def}
	The prop morphism $A \colon \ccat P \to \tcat P$ is well-defined.
\end{lemma}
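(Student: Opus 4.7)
The approach is to verify that the assignment $A$, defined on generators of $\ccat P$, respects each defining relation of $\ccat P$. These relations fall into three groups: the base relations of $\cat P$; the standard prop axioms (bifunctoriality of $+$, symmetric monoidal coherence for swap, involutivity of $x$); and the eight control equations (a)--(h) of Definition~\ref{def:cprop}. The right-hand sides of $A$'s defining clauses are already well-typed morphisms in $\scat P_2$, so the task is purely to check preservation of relations.

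For the base data: each generator $g \in G$ is sent to $\ovl g$, each base relation $(f = h) \in R$ becomes $\ovl f = \ovl h$ in $\scat P_2$ by~\eqref{eq:base-rel}, and bifunctoriality of $+$ lifts (via $A(f + g) = A(f) \otimes A(g)$) to bifunctoriality of $\otimes$ in $\scat P_2$, inherited from the bipermutative structure (Theorem~\ref{th:krig}). For morphisms living in the subcategory $\ccat X \subset \ccat P$ — including $x$, the swaps, and $C^1(x)$ — the assignment coincides with $\alpha$ composed with the canonical inclusion $\tcat X \cong \cat{Perm}_2 \hookrightarrow \scat P_2$, so all relations internal to $\ccat X$ (involutivity of $x$, the prop swap axioms, and the control equation (g)) are respected by Lemma~\ref{lem:alpha-wd}.

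The remaining control equations reduce to elementary bipermutative identities. Equations (a), (b), (f) follow from bifunctoriality of $\oplus$; (c) from right distributivity, which rewrites $(\iid_{2^n} \oplus A(f)) \otimes \iid_{2^m}$ as $\iid_{2^{n+m}} \oplus (A(f) \otimes \iid_{2^m})$; (d) from the identification $\gamma_{1,1} \otimes \iid_{2^n} = \gamma_{2^n, 2^n}$ (a coherence between distributivity and the symmetry) combined with naturality of $\gamma^\oplus$; and (e) from bifunctoriality of $\oplus$ together with $\iid_2 \otimes A(f) = A(f) \oplus A(f)$.

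The main obstacle is the swap coherence (h). Here one uses distributivity to rewrite $s_{2,2} \otimes \iid_{2^n}$ as a pure $\oplus$-morphism on the four-fold decomposition $2^{n+2} = 2^n \oplus 2^n \oplus 2^n \oplus 2^n$ that swaps the first two summands, while $A(C^1(C^1(f)))$ decomposes as $\iid_{2^n} \oplus \iid_{2^n} \oplus \iid_{2^n} \oplus A(f)$; bifunctoriality of $\oplus$ then closes the argument. The real difficulty throughout is organizational rather than logical: keeping track of the bipermutative coherence isomorphisms when rewriting tensors as direct sums, and confirming that the $\otimes$-structure on $\scat P_2$ inherited via Proposition~\ref{prop:powers} meshes correctly with the prop structure on $\ccat P$ coming from $+$. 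Once these translations are applied consistently, each equation is a routine calculation in a semisimple bipermutative category.
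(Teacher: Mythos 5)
Your proposal is correct and follows essentially the same route as the paper's proof: check each defining relation of $\ccat P$ in turn, reducing the control equations to bifunctoriality of $\oplus$, strict right distributivity, and naturality of $\gamma^\oplus$ (the paper's own proof even omits equation (h), which you rightly treat explicitly). One small slip: $s_{2,2}\otimes\iid_{2^n}$ swaps the \emph{middle} two of the four summands $2^n\oplus 2^n\oplus 2^n\oplus 2^n$, not the first two; since both are identity blocks in $A(C^1(C^1(f)))$, your argument for (h) is unaffected.
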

\begin{proof}
	We need to prove that all equations that hold in $\ccat P$ also hold in
	$\tcat P$ after taking the image under $A$. The equations of monoidal
	coherence hold because $\tcat P$ is a prop. We further observe that if $f$
	does not contain any control, then $A(f) = \widetilde f$; therefore, all
 original equations from $\cat P$ hold by~\eqref{eq:base-rel}.
	Additionally, the equations \ref{cprop:composition} and
	\ref{cprop:identity} hold in $\tcat P$ because $\oplus$ is a
	(bi)functor; \ref{cprop:strength} follows by coherence of distributivity;
	\ref{cprop:colour} by naturality of the symmetry $\symplus$;
	\ref{cprop:complementarity} and \ref{cprop:commutativity} by
	bifunctoriality of $\oplus$; \ref{cprop:swap} by naturality of $\symplus$
	and associative coherence; and~\ref{cprop:swapcoh} follows from
	associativity of $\oplus$, since it is a bracketing matter for control.
\end{proof}

Similarly, we can define a functor $\scat P_2 \to \ccat P$.
\[
	\begin{array}{lrcl}
		B \colon & \scat P_2 & \to & \ccat P \\
		& 2^n & \mapsto & n \\
		& f \in \scat X_2 & \mapsto & \beta(f) \\
		& \ovl g \in \ovl G & \mapsto & g \\
		& \iid_{2^n} \oplus f & \mapsto & C^1(B(f)) \\
		& f \oplus \iid_{2^n} & \mapsto & C^0(B(f))
	\end{array}
\]

\begin{lemma}\label{lem:b-def}
	The functor $B \colon \tcat P \to \ccat P$ is a well-defined prop morphism.
\end{lemma}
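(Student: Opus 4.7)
The plan is to present $\scat P_2$ by generators and relations inherited from the rigged crop construction (restricted to objects that are powers of $2$), and then to verify that the case-analysis definition of $B$ on these generators respects all relations and extends consistently to a prop morphism. Specifically, I take $B$ on generators exactly as stated: permutations in $\scat X_2$ via $\beta$, generators $\ovl g$ to $g$, and morphisms of the form $\iid_{2^n} \oplus f$ (respectively $f \oplus \iid_{2^n}$) to iterated applications of $C^1$ (respectively $C^0$) to $B(f)$. The extension to general morphisms is then forced by requiring $B$ to preserve composition and the prop tensor $\otimes$.

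To show this is well-defined, I check each family of relations in $\scat P_2$. First, the bipermutative coherence relations of $\scat P$ restricted to $\scat P_2$: those internal to $\scat X_2$ are respected because $\beta$ is a crop morphism by Lemma~\ref{lem:beta-wd}, while those mixing control with the product reduce to the crop axioms~\ref{cprop:composition}--\ref{cprop:swapcoh} of Definition~\ref{def:cprop}. Second, the NOT identification $\widetilde x = \gamma_{1,1}$ is respected because $\beta(\gamma_{1,1}) = x$ (computed inside the proof of Lemma~\ref{lem:beta-wd}) while $\ovl x \mapsto x$ by definition of $B$. Third, the bifunctoriality relations~\eqref{eq:mon-tensor}, $\widetilde{f + \iid} \circ \widetilde{\iid + h} = \widetilde{\iid + h} \circ \widetilde{f + \iid}$, are sent by $B$ to instances of the commutativity axiom~\ref{cprop:commutativity}, namely $C^0(B(\widetilde f)) \circ C^1(B(\widetilde h)) = C^1(B(\widetilde h)) \circ C^0(B(\widetilde f))$. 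Fourth, the base relations~\eqref{eq:base-rel}, $\widetilde f = \widetilde h$ for $(f = h) \in R$, hold in $\ccat P$ via the embedding $\cat P \hookrightarrow \ccat P$ established earlier.

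To confirm $B$ is a prop morphism, object preservation $B(2^n \otimes 2^m) = B(2^{n+m}) = n + m = B(2^n) + B(2^m)$ is immediate. Preservation of the prop tensor $\otimes$ on morphisms follows from~\eqref{eq:new-tensor}, which expresses $\otimes$ as a composite built from $\oplus$ and the symmetries; since $B$ commutes with each of these ingredients (by the definition on controls and by Lemma~\ref{lem:beta-wd} for the symmetries), it commutes with $\otimes$. Preservation of the swap itself reduces to the identity $\beta(s_{2^n, 2^m}) = \mathsf{swap}_{n,m}$, again from Lemma~\ref{lem:beta-wd}.

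The main obstacle is ensuring the consistency of the case analysis on $\oplus$-padded morphisms: a morphism of the form $f \oplus g$ can be written in two ways as $(f \oplus \iid) \circ (\iid \oplus g) = (\iid \oplus g) \circ (f \oplus \iid)$, and $B$ must produce the same image on both decompositions. This reduces precisely to the crop commutativity axiom~\ref{cprop:commutativity}, which is built into the theory for exactly this purpose. Once this central compatibility is secured, the remaining verifications are routine diagram chases within $\ccat P$ or within $\cat{Perm}_2$ via $\beta$.
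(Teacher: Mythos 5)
Your overall architecture (define $B$ on generators, check the relations, then verify prop-morphism-hood) matches the paper's, but two of your reductions conflate the two monoidal structures, and in doing so you miss the one axiom the paper's proof actually hinges on. The relation \eqref{eq:mon-tensor} reads $\widetilde{f+\iid}\circ\widetilde{\iid+h}=\widetilde{\iid+h}\circ\widetilde{f+\iid}$, where $\widetilde{f+\iid}$ means $\widetilde f\otimes\iid$ (the \emph{tensor product}, which $B$ must send to the prop tensor $+$ of $\ccat P$), not $\widetilde f\oplus\iid$ (the direct sum, which $B$ sends to $C^0$). Its image under $B$ is therefore the interchange law $(B(\widetilde f)+\iid)\circ(\iid+B(\widetilde h))=(\iid+B(\widetilde h))\circ(B(\widetilde f)+\iid)$, which holds simply because $\ccat P$ is a prop --- \emph{provided} $B$ really is a prop morphism. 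It is not an instance of \ref{cprop:commutativity} as you claim.

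This matters because establishing that $B$ preserves $\otimes$ is exactly where the work lies, and your justification (``$B$ commutes with each of these ingredients'') is circular at that point. Unfolding \eqref{eq:new-tensor}, one must show $B(\iid_2\otimes f)=\iid_1+B(f)$, i.e.\ $B(f\oplus f)=\iid_1+B(f)$. Decomposing $f\oplus f=(f\oplus\iid)\circ(\iid\oplus f)$ gives $C^0(B(f))\circ C^1(B(f))$, and identifying this with $\iid_1+B(f)$ is precisely the complementarity axiom \ref{cprop:complementarity} --- which your proof never invokes. (Your final paragraph correctly observes that the \emph{consistency} of $B$ on the two decompositions of $f\oplus g$ reduces to \ref{cprop:commutativity}, but that only shows $B(f\oplus g)$ is unambiguous, not that it equals anything expressible with the prop tensor $+$.) You should also say a word, as the paper does, about why the case analysis covers all morphisms of $\scat P_2$, i.e.\ why every generator occurrence can be brought to the form $\iid_{2^m-2^k}\oplus\widetilde g$ by conjugating with permutations. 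Once complementarity is inserted at the right place the argument closes; the rest of your verification (coherence via $\beta$, the NOT identification, the base relations via the embedding $\cat P\embed\ccat P$) is fine and agrees with the paper.
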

\begin{proof}
	Note that $\scat P$ is generated by $\ovl G$ and permutations
	$\symplus$. Therefore, morphisms are of the form:
	\[
		\gamma_0 \circ (\iid \oplus \widetilde{g_1} \oplus \iid) \circ
		\gamma_1 \circ \dots \circ \gamma_{n-1} \circ (\iid \oplus
		\widetilde{g_n} \oplus \iid) \circ \gamma_n
	\]
	where $\gamma_i$ are short for compositions of symmetries for the direct
	sum $\oplus$. Since the generator terms are surrounded by permutations, we
	can write them as follows, without loss of generality:
	\[
		\gamma_0 \circ (\iid \oplus \widetilde{g_1}) \circ \gamma_1 \circ \dots
		\circ \gamma_{n-1} \circ (\iid \oplus \widetilde{g_n}) \circ \gamma_n
	\]
	If this is a morphism $2^m \to 2^m$ in $\tcat P$, then the terms containing
	generators $\widetilde g \colon 2^k \to 2^k$ are of the form:
	\[
		\iid_{2^m - 2^k} \oplus \widetilde g
		=
		\iid_{2^{m-1}} \oplus (\iid_{2^{m-2}} \oplus (\dots \oplus (\iid_{2^k} \oplus \widetilde g) \dots ))
	\]
	Therefore the definition of $B$ above defines an image of all morphism in
	$\tcat P$. We now prove that $B$ is well-defined. All coherence equations
	hold because $\beta$ is well-defined. The only equation specific to $\scat
	P_2$ is (\ref{eq:mon-tensor}), which is preserved since $\otimes$
	is a monoidal structure, provided that $B$ is indeed a prop morphism. We
	thus need to prove that $B(f \otimes g) =
	B(f) + B(g)$ for all $f$ and $g$. We know, through $\beta$, that $B(\symtimes_{2^m,2^n}) =
	\sigma_{m,n}$, so that it suffices to show that $B(\iid_2 \otimes
	f) = \iid_1 + B(f)$.
	\begin{align*}
		B(\iid_2 \otimes f)
		&= B(f \oplus f) \\
		&= B(f \oplus \iid_n) \circ B(\iid_n \oplus f) \\
		&= C^0(B(f)) \circ C^1(B(f)) \\
		&\stackrel{\ref{cprop:complementarity}}{=} \iid_1 + B(f)
	\end{align*}
	Thus $B$ is a prop morphism.
\end{proof}

Both functors are well-defined and each other's inverse, yielding an isomorphism of props.

\begin{theorem}
	\label{th:cp-iso}
	The props $\ccat P$ and $\tcat P$ are isomorphic.
\end{theorem}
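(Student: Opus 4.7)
The plan is to show that the prop morphisms $A$ and $B$ established in Lemmas \ref{lem:a-def} and \ref{lem:b-def} are mutually inverse. Since both composites $B \circ A$ and $A \circ B$ are prop morphisms that act as the identity on objects (via the bijection $n \leftrightarrow 2^n$), it suffices to verify that they act as the identity on a generating set of morphisms; the rest follows by functoriality of prop morphisms.

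To handle $B \circ A = \iid_{\ccat P}$, I would split the generators of $\ccat P$ into three classes. On generators $g \in G$ inherited from the base prop, the defining clauses give $B(A(g)) = B(\ovl g) = g$. On the NOT gate and the symmetries, I restrict $A$ and $B$ to the subprop $\ccat X \hookrightarrow \ccat P$, where they coincide with $\alpha$ and $\beta$; these are mutually inverse by Theorem \ref{th:cx-iso}. On controlled morphisms I proceed by induction on the nesting depth of $C^0, C^1$, using the defining clauses:
\[
	B(A(C^1(f))) = B(\iid_{2^n} \oplus A(f)) = C^1(B(A(f))) = C^1(f),
\]
and symmetrically for $C^0$. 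The case of $\iid_m + f$ reduces similarly using the identity $B(\iid_2 \otimes h) = \iid_1 + B(h)$ already established in the proof of Lemma \ref{lem:b-def}.

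For $A \circ B = \iid_{\scat P_2}$, I would use the normal form exhibited in the proof of Lemma \ref{lem:b-def}: every morphism $2^m \to 2^m$ in $\scat P_2$ decomposes as a composite of symmetries and padded generators $\iid_{2^m - 2^k} \oplus \widetilde g$. On the generators $\ovl g$, the defining clauses yield $A(B(\ovl g)) = A(g) = \ovl g$. On the permutation part, I restrict to $\scat X_2 \hookrightarrow \scat P_2$ and again invoke Theorem \ref{th:cx-iso}. On padded morphisms $\iid_{2^n} \oplus f$, induction on the number of $\oplus$-layers gives
\[
	A(B(\iid_{2^n} \oplus f)) = A(C^1(B(f))) = \iid_{2^n} \oplus A(B(f)) = \iid_{2^n} \oplus f,
\]
and analogously for $f \oplus \iid_{2^n}$.

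The main subtlety I anticipate is ensuring that the inductive clauses are exhaustive and that each step genuinely reduces morphism complexity. Here the normal form from Lemma \ref{lem:b-def} is essential: it guarantees that every morphism of $\scat P_2$ is generated from the base cases already treated, and each defining clause strips one $\oplus$-layer or one layer of control. Once this well-foundedness is settled, the remaining verifications are routine substitutions of defining clauses, because both $A$ and $B$ being prop morphisms handles all prop-structural equations automatically.
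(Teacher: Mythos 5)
Your proposal is correct and follows essentially the same route as the paper, which simply asserts that the well-defined prop morphisms $A$ and $B$ of Lemmas \ref{lem:a-def} and \ref{lem:b-def} are mutually inverse; you supply the generator-by-generator verification (base generators, the $\ccat X$/$\scat X_2$ part via Theorem \ref{th:cx-iso}, and induction on control/$\oplus$-nesting using the normal form from Lemma \ref{lem:b-def}) that the paper leaves implicit.
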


Thus the control equations of Figure~\ref{fig:control-equations} are exactly
the ones that turn a crop into a rigged crop. In this sense, we conclude they
are structural: they govern exactly the rig structure.  A fortiori,
Corollary~\ref{cor:suniv} shows that the control equations are minimal: our
definition of controlled prop (see Definition~\ref{def:cprop}) is universal
among such structural categories.

This main theorem explains the ubiquity of rig structure in computing: it
captures exactly the nature of control flow. The control equations have clear
computational interpretations, and the theorem shows that they are canonical in
a strong way. Conversely, the theorem bestows the mathematically pleasing
structure of rig categories with computational aspects.

\section{Applications}
\label{sec:examples}

To showcase the usefulness of the control equations of Figure~\ref{fig:control-equations}, this section discusses five applications drawn from circuit theory, both Boolean and quantum, that can be handled easily using our control theory.

\subsection{Reversible Boolean circuits}

We have already seen that controlling the degenerate prop consisting of a single NOT gate generates the full prop of permutations between sets of size power of two, that is, of reversible Boolean circuits. To make this more concrete, let us derive the circuit identity
\[
 \tf{classical-example}
 =
 \tf{classical-example-4}
\]
expressing that two NOT gates that are controlled on the same wire but have different targets always commute. This may seem obvious, but usual proofs resort to truth tables. In contrast, the proof below is equational, and demonstrates how the control equations interact. First, observe the following.
\begin{equation}\label{eq:cl-example-lemma}
	\tf{classical-example-lemma}
	\stackrel{\text{(inv.)}}{=}
	\tf{classical-example-lemma-1}
	\stackrel{\text{\ref{cprop:complementarity}}}{=}
	\tf{classical-example-lemma-2}
	\stackrel{\text{\ref{cprop:composition}}}{=}
	\tf{classical-example-lemma-3}
\end{equation}
The desired equation now follows.
\begin{equation}\label{eq:cl-example}
	\tf{classical-example}
	\stackrel{\eqref{eq:cl-example-lemma}}{=}
	\tf{classical-example-2}
	\stackrel{\text{\ref{cprop:commutativity}}}{=}
	\tf{classical-example-3}
	\stackrel{\eqref{eq:cl-example-lemma}}{=}
	\tf{classical-example-4}
\end{equation}

\subsection{Sleator-Weinfurter}

The Sleator-Weinfurter construction is a general construction for
forming a doubly controlled $f^2$-gate using nothing but controlled
$f$-gates and controlled NOT gates (see Figure~\ref{fig:sl-w}). This
property has led to the quantum synthesis of reversible circuits being
studied by means of the NCV gate set $\{NOT,CV,CNOT\}$ (see, e.g.,
\cite{abdessaied2016complexity}). Originally shown through linear
algebra~\cite{barencoetal:gates}, the construction has a simple proof
in terms of the control equations.

\begin{figure*}
	\[
		\begin{array}{ll}
			\tf{sl-w} &
			\stackrel{\text{\ref{cprop:complementarity}}}{=} \tf{sl-w-2}
			\stackrel{\text{\ref{cprop:colour}}}{=} \tf{sl-w-3}
			\\[6ex]
			&
			\stackrel{\text{\ref{cprop:commutativity}
			\& \ref{cprop:composition}}}{=} \tf{sl-w-4} 
			\stackrel{\text{\ref{cprop:commutativity}}}{=} \tf{sl-w-5}
			\\[6ex]
			&
			\stackrel{\text{\ref{cprop:composition}}}{=} \tf{sl-w-6}
			\stackrel{\text{\ref{cprop:colour}}}{=} \tf{sl-w-7}
			\stackrel{\text{\ref{cprop:composition}}}{=} \tf{sl-w-8}
		\end{array}
	\]
	\caption{Sleator-Weinfurter identity through control equations.}
	\label{fig:sl-w}
\end{figure*}

\subsection{Schumacher-Westmoreland: modal quantum theory}

Starting from a theory of matrices, it is now easy to generate a circuit
equational theory with our control completion of circuit props. We give an
example of this on modal quantum theory~\cite{schumacherwestmoreland:modal}.
The latter is a toy theory for quantum mechanics where scalars are not complex
numbers but drawn from a finite field, for example the two-element field
$\Z_2$. Of course this cannot express all of quantum theory, but it
still retains key notions such as entanglement, mixed states, superdense coding
and teleportation~\cite{schumacherwestmoreland:modal}. This makes modal quantum
theory quite a rich toy model, which makes modal quantum circuits, acting on \emph{mobits} (modal bits), an interesting case study.

As pointed out by Lafont~\cite{lafont2003}, reversible matrices on $\Z_2$ have
two generators
\[
	X = \begin{bmatrix} 0 & 1 \\ 1 & 0 \end{bmatrix}
	\qquad
	J = \begin{bmatrix} 1 & 1 \\ 1 & 0 \end{bmatrix}
\]
governed by the following complete~\cite[Theorem 6]{lafont2003} set of
matrix equations.
\begin{align}
	\label{laf:xinv} X^2 &= \iid_2 \\
	\label{laf:qinv} J^3 &= \iid_2 \\
	\label{laf:bit} JXJ &= X \\
	\label{laf:yb} (\iid_1 {\oplus} X)(X {\oplus} \iid_1)(\iid_1 {\oplus} X)
	&= (X {\oplus} \iid_1)(\iid_1 {\oplus} X)(X {\oplus} \iid_1) \\
	\label{laf:ybq} (\iid_1 {\oplus} X)(J {\oplus} \iid_1)(\iid_1 {\oplus} X)
	&= (X {\oplus} \iid_1)(\iid_1 {\oplus} J)(X {\oplus} \iid_1) \\
	\label{laf:ybqqq} (\iid_1 {\oplus} J)(X {\oplus} \iid_1)(\iid_1 {\oplus} J)
	&= (J {\oplus} \iid_1)(\iid_1 {\oplus} J)(J {\oplus} \iid_1)
\end{align}
The first three equations are simple mobit equations, \eqref{laf:yb} is
Yang-Baxter, and \eqref{laf:ybq} is the naturality of the symmetry. The last one is
not structural, and translates to the following circuit equation.
\begin{equation}
	\label{eq:mobit-missing}
	\tf{mobit}
	=
	\tf{mobit-2}
\end{equation}

\begin{proposition}
	Equation \eqref{eq:mobit-missing} is independent of the control equations
	of Figure~\ref{fig:control-equations} and equations \eqref{laf:xinv},
	\eqref{laf:qinv}, and \eqref{laf:bit}.
\end{proposition}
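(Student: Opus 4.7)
The plan is to exhibit a concrete semisimple bipermutative model in which all the control equations of Figure~\ref{fig:control-equations} together with equations \eqref{laf:xinv}, \eqref{laf:qinv}, \eqref{laf:bit} hold, but equation \eqref{eq:mobit-missing} --- equivalently, its matrix version \eqref{laf:ybqqq} --- fails. The natural candidate is the category $\cat{Mat}_\Z$ of finitely generated free $\Z$-modules and linear maps, a semisimple bipermutative category with $\oplus$ direct sum and $\otimes$ Kronecker product. The control equations of Figure~\ref{fig:control-equations} are automatically valid here, since they are precisely the identities used to verify well-definedness of the functor $A \colon \ccat P \to \scat P_2$ in Lemma~\ref{lem:a-def}.

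Concretely, I set $X = \gamma_{1,1} = \left[\begin{smallmatrix} 0 & 1 \\ 1 & 0 \end{smallmatrix}\right]$ and take
\[
  J = \begin{bmatrix} 0 & -1 \\ 1 & -1 \end{bmatrix}\text,
\]
the companion matrix of $t^2 + t + 1$. A short calculation verifies $X^2 = I$, $J^3 = I$, and $JXJ = X$, so the assignment sending the generators $X, J$ to these matrices defines a strict monoidal functor $F \colon \cat P \to \cat{Mat}_\Z$ with $F(x) = \gamma_{1,1}$, where $\cat P$ denotes the controllable prop freely presented by $X$ and $J$ subject only to \eqref{laf:xinv}, \eqref{laf:qinv}, \eqref{laf:bit}. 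By Corollary~\ref{cor:suniv} this extends uniquely to a prop morphism $\ovl F_2 \colon \scat P_2 \to (\cat{Mat}_\Z)_2$, and Theorem~\ref{th:cp-iso} identifies $\scat P_2$ with $\ccat P$.

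To finish, I compute both sides of \eqref{laf:ybqqq} as $3 \times 3$ integer matrices, obtaining
\begin{align*}
  (\iid_1 \oplus J)(X \oplus \iid_1)(\iid_1 \oplus J) &=
  \begin{bmatrix} 0 & 0 & -1 \\ 0 & -1 & 1 \\ 1 & -1 & 1 \end{bmatrix}\text, \\
  (J \oplus \iid_1)(\iid_1 \oplus J)(J \oplus \iid_1) &=
  \begin{bmatrix} 0 & 0 & 1 \\ 0 & -1 & 1 \\ 1 & -1 & -1 \end{bmatrix}\text,
\end{align*}
which differ in the $(1,3)$ and $(3,3)$ entries. Hence \eqref{laf:ybqqq}, and with it its circuit translation \eqref{eq:mobit-missing}, fails under $\ovl F$; consequently, \eqref{eq:mobit-missing} cannot be derived in $\ccat P$ from the control equations together with \eqref{laf:xinv}, \eqref{laf:qinv}, \eqref{laf:bit}.

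The main obstacle lies in the choice of base ring. Over $\Z_2$, Lafont's six equations are complete for reversible matrices, so \eqref{laf:ybqqq} would already be a consequence of the remaining equations and no counterexample could arise there. One has to work in a ring where the group-level identities $J^3 = I$ and $JXJ = X$ do not force the rig-level braid-like relation involving $\oplus$; the companion matrix of $t^2 + t + 1$ over $\Z$ --- essentially the two-dimensional rotation/reflection realisation of $S_3$ --- is one such witness, and once it is in hand the remainder of the argument is a routine matrix verification.
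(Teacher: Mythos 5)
Your overall strategy is the same as the paper's: exhibit a semisimple bipermutative model in which the control equations of Figure~\ref{fig:control-equations} and equations \eqref{laf:xinv}, \eqref{laf:qinv}, \eqref{laf:bit} hold but the target equation fails. Your witness is arithmetically correct ($X^2 = I$, $J^3 = I$ and $JXJ = X$ all check out for the companion matrix of $t^2+t+1$, and the two $3\times 3$ products do differ as you compute), but it is far more elaborate than necessary: the paper simply sends $J$ to the identity and $X$ to the NOT gate inside permutation matrices, where the three mobit equations hold trivially and the two sides of \eqref{eq:mobit-missing} evaluate to a swap and the identity. Your closing paragraph is mistaken on exactly this point: completeness of Lafont's six equations over $\Z_2$ means that provable equality coincides with semantic equality under the \emph{standard} interpretation of the generators; it does not mean that \eqref{laf:ybqqq} is derivable from the other five, nor does it preclude a separating model living inside $\cat{GL}(\Z_2)$ --- the paper's model does exactly that, by reinterpreting $J$. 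Nothing forces you to leave characteristic $2$ or to keep $J$ of order $3$.

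There is also one genuine gap to close. The proposition concerns the circuit equation \eqref{eq:mobit-missing}, which lives on two mobits and is interpreted in your model as an identity between $4\times 4$ matrices, whereas you evaluate the $3\times 3$ matrix identity \eqref{laf:ybqqq} and declare the two ``equivalent'' without argument; they are not literally the same statement. To finish you must either interpret both sides of \eqref{eq:mobit-missing} directly in your model (as the paper does), or justify the transfer: each side of \eqref{eq:mobit-missing} is sent to a fixed permutation conjugate of $(\text{a side of \eqref{laf:ybqqq}}) \oplus \iid_1$, and since $M \oplus \iid_1 = N \oplus \iid_1$ forces $M = N$ for matrices, failure of \eqref{laf:ybqqq} entails failure of \eqref{eq:mobit-missing}. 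A smaller point: the category of all finitely generated free $\Z$-modules and linear maps is not a crop, since its morphisms are not all endomorphisms; you should target $\cat{GL}(\Z)$, or rather its restriction $\cat{GL}(\Z)_2$ to powers of $2$, which is where Corollary~\ref{cor:suniv} actually applies.
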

\begin{proof}
	Interpret diagrams in $X$ and $J$ in permutation matrices, by mapping
	$X$ to the usual NOT gate and $J$ to the identity. All control equations
	and \eqref{laf:xinv}, \eqref{laf:qinv}, and \eqref{laf:bit} hold directly.
	However, equation~\eqref{eq:mobit-missing} does not hold in this model: the
	left-hand side is a swap, whereas the right-hand side is
	the identity.
\end{proof}

We show that this independent equation~\eqref{eq:mobit-missing} is the unique
one necessary for completeness in addition to the mobit equations and control
equations.

\begin{theorem}
	The control equations of Figure~\ref{fig:control-equations} and equations
	\eqref{laf:xinv}, \eqref{laf:qinv}, \eqref{laf:bit} and
	\eqref{eq:mobit-missing} are complete for mobit circuits interpreted in
	$\cat{GL}(\Z_2)$.
\end{theorem}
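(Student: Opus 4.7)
The plan is to reduce completeness to Lafont's theorem via the main correspondence of Theorem~\ref{th:cp-iso}. Let $\cat P$ denote the controllable prop generated by $X$ (as distinguished involution) and $J$, subject only to the mobit equations \eqref{laf:xinv}, \eqref{laf:qinv}, and \eqref{laf:bit}. By Theorem~\ref{th:cp-iso}, $\ccat P$ is isomorphic to $\scat P_2$. Writing $\cat Q$ for the quotient of $\scat P_2$ by equation \eqref{eq:mobit-missing}, this is precisely the prop presented by the control equations of Figure~\ref{fig:control-equations} together with \eqref{laf:xinv}, \eqref{laf:qinv}, \eqref{laf:bit}, and \eqref{eq:mobit-missing}. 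So it suffices to establish a prop isomorphism $\cat Q \cong \cat{GL}(\Z_2)_2$.

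For one direction, I would invoke Corollary~\ref{cor:suniv}: the interpretation of $X$ and $J$ as matrices in $\cat{GL}(\Z_2)$ extends uniquely to a NOT-preserving prop morphism $\ovl F_2 \colon \scat P_2 \to \cat{GL}(\Z_2)_2$. Since \eqref{eq:mobit-missing} holds in $\cat{GL}(\Z_2)$ by a direct matrix computation, $\ovl F_2$ descends to a prop morphism $\Phi \colon \cat Q \to \cat{GL}(\Z_2)_2$.

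For the inverse, the key ingredient is Lafont's completeness result~\cite[Theorem 6]{lafont2003}, which states that $\cat{GL}(\Z_2)$, as a semisimple bipermutative category, is presented by generators $X$ and $J$ subject to the six equations \eqref{laf:xinv}--\eqref{laf:ybqqq}. I would then verify that each of these six equations holds in $\cat Q$: equations \eqref{laf:xinv}, \eqref{laf:qinv}, and \eqref{laf:bit} are imported directly from $\cat P$; \eqref{laf:yb} is the Yang-Baxter equation for the symmetry $\gamma_{1,1}$, which coincides with $X$ by the defining relation of the rigged crop, and so follows from symmetric monoidal coherence; \eqref{laf:ybq} is the naturality of $\gamma$ applied to $J$; and \eqref{laf:ybqqq} is exactly \eqref{eq:mobit-missing}. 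Lafont's universal property then supplies an inverse bipermutative functor into the quotient, whose restriction to powers of two gives a prop morphism $\Psi \colon \cat{GL}(\Z_2)_2 \to \cat Q$, inverse to $\Phi$ by agreement on generators.

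The main obstacle will be rigorously deriving Lafont's two ``structural'' equations \eqref{laf:yb} and \eqref{laf:ybq} inside $\cat Q$ using only the rig structure rather than by a matrix argument. This is precisely where the identification of $X$ with $\gamma_{1,1}$ in the rigged crop definition, together with Theorem~\ref{th:cp-iso}'s guarantee that the control equations suffice to generate the full bipermutative coherences, does the heavy lifting--so once that reduction is in place, the remaining checks reduce to applications of naturality and the symmetric monoidal coherence theorem.
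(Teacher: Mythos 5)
Your soundness direction is fine and matches the paper's functor $A_J$: the control equations and the three mobit equations hold in $\cat{GL}(\Z_2)_2$, and \eqref{eq:mobit-missing} is checked by a matrix computation, so the interpretation functor descends to the quotient. The completeness direction, however, has a concrete gap. Lafont's presentation is a presentation of the \emph{full} prop $\cat{GL}(\Z_2)$ under $\oplus$, with generators $X, J \colon 2 \to 2$ and relations \eqref{laf:yb}--\eqref{laf:ybqqq} living at dimension $3$. The category you need to map into, $\cat Q \cong \ccat J\quo$, only has powers of $2$ as objects, so the relation \eqref{laf:ybqqq} is not even a statement in $\cat Q$; your claim that ``\eqref{laf:ybqqq} is exactly \eqref{eq:mobit-missing}'' is false on dimensions, since \eqref{eq:mobit-missing} is a two-mobit (dimension-$4$) circuit equation, i.e.\ a padded avatar of the dimension-$3$ identity. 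To invoke Lafont's universal property you would have to land in the full rigged crop $\scat P$ quotiented by a dimension-$3$ relation that you have not been given, and then argue both that \eqref{eq:mobit-missing} generates the same congruence on the power-of-$2$ part and that quotienting commutes with restricting to powers of $2$. Neither step is automatic: you cannot in general cancel a direct summand to pass from a dimension-$4$ identity to a dimension-$3$ one in a syntactically presented category.

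This dimension mismatch is precisely what the paper's proof is engineered to avoid. Instead of Lafont's generators $X, J$ at dimensions $2$ and $3$, the paper presents $\cat{GL}(\Z_2)_2$ by the Gray-code generators $\theta_{n,i}$ and $\zeta_{n,i}$, whose defining relations all live at powers of $2$, and verifies each family of relations by induction on the number of mobits using the inductive formulas \eqref{eq:ind-theta} and \eqref{eq:ind-zeta}. The base cases at two mobits are exactly where the control equations (\ref{cprop:swap}, \ref{cprop:composition}, etc.) and the added equation \eqref{eq:mobit-missing} are used; the inductive steps use \ref{cprop:swapcoh} and \ref{cprop:complementarity}. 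Your appeal to ``Theorem~\ref{th:cp-iso} doing the heavy lifting'' for \eqref{laf:yb} and \eqref{laf:ybq} is correct in spirit for those two structural equations, but the non-structural equation \eqref{laf:ybqqq} cannot be discharged the same way, and the Gray-code induction (or an equivalent argument translating all power-of-$2$ instances of \eqref{laf:ybqqq} back to \eqref{eq:mobit-missing}) is the missing ingredient your proposal would need to supply.
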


The same proof strategy as before works. Define functors between the circuit
prop obtained with mobit equations as well as control equations, and the prop
$\cat{GL}(\Z_2)_2$. Observe that $\cat{GL}(\Z_2)$ is a bipermutative
category, and thus $\cat{GL}(\Z_2)_2$ is a crop. Write $\cat
J$ for the crop generated by $x$ and $j$ such that $x^2 = \iid_1$, $j^3 =
\iid_1$, and $jxj = x$. Write $\ccat J\quo$ for the controlled prop out
of $\cat J$ further quotiented by \eqref{eq:mobit-missing}.

Similar to Gray code transpositions (see \secref{sec:gray}), we define Gray
code $J$'s to be:
\[
	\zeta_{n,i} =
	r_n^{-1} \circ (\iid_i \oplus J \oplus \iid_{2^n - 2 - i}) \circ r_n \text.
\]
for which we can also obtain an inductive formulation:
\begin{equation}\label{eq:ind-zeta}
	\zeta_{n+1,i} =
	\left\{
		\begin{array}{ll}
			\zeta_{n,i} \oplus \iid_{2^n} & \text{if } i < 2^n - 1\text, \\
			\symtimes_{2^n,2} \circ (\iid_{2^n} \oplus J \oplus \iid_{2^n-2}) \circ \symtimes_{2,2^n}
			& \text{if } i = 2^n - 1\text, \\
			\iid_{2^n} \oplus \left( \theta_{n,\chi(n)} \circ
			\zeta_{n,\chi(n)} \circ \theta_{n,\chi(n)} \right)
			& \text{otherwise.}
		\end{array}
	\right.
\end{equation}
with $\chi(n) = 2^{n+1} - 1 - i$.

This allows us the define the following prop morphisms, one direction obtained
through the prop and control structure, and the other because morphisms in
$\cat{GL}(\Z_2)$ are generated by $X$ and $J$, this time in Gray code:
$\theta_{n,i}$ and $\zeta_{n,i}$. In the definitions below, the order $<_\ell$
describes the lexicographic order.
\[
	\begin{array}{lrcllrcll}
		A_J \colon & \ccat J\quo & \to & \cat{GL}(\Z_2)_2& \\
		& n & \mapsto & 2^n & \\
		& f \in \ccat X & \mapsto & \alpha(f) & \\
		& j & \mapsto & J & \\
		& C^1(f) & \mapsto & \iid_{2^n} \oplus A_J(f) & \\
		& C^0(f) & \mapsto & A_J(f) \oplus \iid_{2^n} & \\
		& \iid_m + f & \mapsto & \iid_{2^m} \otimes A_J(f) &&&& \\[2ex]
		B_J \colon & \cat{GL}(\Z_2)_2 & \to & \ccat J\quo \\
		& 2^n & \mapsto & n \\
		& \theta_{n,i} & \mapsto & C^{w_n(i)}_{w'_n(i)}(x) \\[1.4ex]
		& \zeta_{n,i} & \mapsto & C^{w_n(i)}_{w'_n(i)}(j) & \text{if } h_n(i) <_\ell h_n(i+1) \text, \\[1.4ex]
		& \zeta_{n,i} & \mapsto & C^{w_n(i)}_{w'_n(i)}(j^2) & \text{otherwise.}
	\end{array}
\]
The functor $A_J$ is well-defined for the same reason as Lemma~\ref{lem:a-def}
and because \eqref{eq:mobit-missing} reflects \eqref{laf:ybqqq}. We then treat
the functor $B_J$ the same way as $\beta \colon \cat{Perm}_2 \to \ccat X$.

\begin{lemma}
	\label{lem:bj-wd}
	The functor $B_J \colon \cat{GL}(\Z_2)_2 \to \ccat J\quo$ is a well-defined
	crop morphism, and
	\[
		B_J(\iid_{2^n} \oplus f) = C^1(B_J(f)) \text,
		\qquad
		B_J(f \oplus \iid_{2^n}) = C^0(B_J(f)) \text,
	\]
	for all morphisms $f \colon 2^n \to 2^n$.
\end{lemma}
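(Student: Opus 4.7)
The plan is to mirror the structure of the proof of Lemma~\ref{lem:beta-wd}, but now with two families of Gray-indexed generators: the transpositions $\theta_{n,i}$ and the Gray code $J$'s $\zeta_{n,i}$. Since morphisms in $\cat{GL}(\Z_2)$ are generated by $X$ and $J$, and these can in turn be expressed via $\theta_{n,i}$ and $\zeta_{n,i}$ using the translation permutations $r_n$, it suffices to check that (i) each Lafont relation \eqref{laf:xinv}--\eqref{laf:ybqqq}, when rewritten on Gray-code generators, holds in $\ccat J\quo$ after applying $B_J$, and (ii) $B_J$ preserves the tensor product.

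For (i), the $X$-only relations \eqref{laf:xinv} and \eqref{laf:yb} and the swap/commutation of disjoint transpositions follow verbatim from Lemma~\ref{lem:beta-wd}, since on the $X$ part $B_J$ coincides with $\beta$. For the $J$-only relation \eqref{laf:qinv}, translated to Gray form $\zeta_{n,i}^3 = \iid$, one uses $j^3 = \iid_1$ in $\cat J$ together with the case split on $h_n(i) <_\ell h_n(i+1)$: composing three copies cycles through $j$ (resp.\ $j^2$) and reduces by $j^3 = \iid$ and \ref{cprop:composition}/\ref{cprop:identity}. The mixed relation \eqref{laf:bit} $JXJ = X$ translates, after surrounding by $r_n$, into an equation between multicontrolled $C^w_{w'}(x)$'s and $C^w_{w'}(j)$'s on the same common prefix/suffix, which follows from $jxj = x$ in $\cat J$ by functoriality of $C^1$ and $C^0$. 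The mixed Yang--Baxter equation \eqref{laf:ybq} is proved by induction on $n$ just as in Lemma~\ref{lem:beta-wd}, using the inductive formulae \eqref{eq:ind-theta} and \eqref{eq:ind-zeta} together with \ref{cprop:swapcoh}; the base case on two bits is $(\iid_1 \oplus X)(J \oplus \iid_1)(\iid_1 \oplus X) = (X \oplus \iid_1)(\iid_1 \oplus J)(X \oplus \iid_1)$, which after applying $B_J$ becomes precisely an instance of \ref{cprop:colour} combined with \ref{cprop:commutativity}. Finally, the last relation \eqref{laf:ybqqq} is the key one: its base case under $B_J$ is exactly the quotienting equation \eqref{eq:mobit-missing}, and the inductive case again lifts via \eqref{eq:ind-theta}, \eqref{eq:ind-zeta}, and \ref{cprop:swapcoh}.

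For (ii), monoidality of $B_J$ is reduced, as in Lemma~\ref{lem:beta-wd}, to showing $B_J(s_{2,2}) = \mathsf{swap}_{1,1}$ and $B_J(\iid_2 \otimes g) = \iid_1 + B_J(g)$ for $g$ a generator. The former is literally the computation performed in Lemma~\ref{lem:beta-wd}, since $s_{2,2}$ involves only transpositions. For the latter, one computes $\iid_2 \otimes \zeta_{n,i} = \zeta_{n,i} \oplus \zeta_{n,i} = (\zeta_{n,i} \oplus \iid_{2^n}) \circ (\iid_{2^n} \oplus \zeta_{n,i})$, writes both factors via \eqref{eq:ind-zeta} as $\zeta_{n+1,i}$ and $\zeta_{n+1,2^{n+1}-1-i}$ up to conjugation by $\theta$'s, applies $B_J$, and closes with \ref{cprop:complementarity}. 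The two boundary formulas $B_J(\iid_{2^n} \oplus f) = C^1(B_J(f))$ and $B_J(f \oplus \iid_{2^n}) = C^0(B_J(f))$ then follow by induction on the length of a generator-decomposition of $f$, treating the two generator families $\theta_{n,i}$ and $\zeta_{n,i}$ in parallel and using \eqref{eq:ind-theta} and \eqref{eq:ind-zeta} together with functoriality of $C^0$ and $C^1$.

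The main obstacle I anticipate is the inductive step for \eqref{laf:ybqqq}: the conjugation by $\theta_{n,2^{n+1}-1-i}$ appearing in the third clause of \eqref{eq:ind-zeta} introduces multicontrolled NOTs that must be threaded through multicontrolled $J$'s, and one must verify carefully that after applying $B_J$ these indeed collapse under a single application of the quotient equation \eqref{eq:mobit-missing} (suitably wrapped by \ref{cprop:swapcoh} and \ref{cprop:commutativity}), rather than producing a more complex residue. The case split on the lexicographic order $h_n(i) <_\ell h_n(i+1)$, which controls whether $\zeta_{n,i}$ is sent to a controlled $j$ or controlled $j^2$, also needs to be tracked consistently through the induction, since flipping the common prefix/suffix at the recursive step can swap the two cases.
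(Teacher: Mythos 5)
Your proposal follows essentially the same route as the paper: reduce well-definedness to the Lafont relations, handle \eqref{laf:yb}--\eqref{laf:ybqqq} by induction via the Gray-code recursions with the quotient equation \eqref{eq:mobit-missing} supplying the otherwise-unprovable base case of \eqref{laf:ybqqq}, and reuse the Lemma~\ref{lem:beta-wd} argument for monoidality and the two boundary formulas. The only minor divergence is that the paper closes the two-mobit base case of \eqref{laf:ybq} using \ref{cprop:swap} and \ref{cprop:composition} rather than \ref{cprop:colour} and \ref{cprop:commutativity}, and the paper is no more explicit than you are about threading the $\theta$-conjugation in \eqref{eq:ind-zeta} and the lexicographic case split through the inductive step.
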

\begin{proof}
	In order to show that $B_J$ is well-defined, we need to prove that all
	equations that hold in $\cat{GL}(\Z_2)_2$ also hold after image by $B_J$.
	To do so, we work with the Lafont representation given earlier in the
	section. The equations \eqref{laf:xinv}, \eqref{laf:qinv}, and
	\eqref{laf:bit} are straightforward.
	Equation~\eqref{laf:yb} is the usual Yang-Baxter, therefore holds after
	applying $B_J$ for the same reason it does for $\beta$ (see
	Lemma~\ref{lem:beta-wd}). Similarly to Yang-Baxter for $\beta$,
	well-definedness at \eqref{laf:ybq} follows by induction, where the base
	case (on two mobits) is:
	\begin{align*}
		\tf{mobit-symm}
		=& \tf{mobit-symm-2} \\
		\stackrel{\ref{cprop:swap}+\ref{cprop:composition}}{=}& \tf{mobit-symm-3} \\
		\stackrel{\ref{cprop:composition}}{=}& \tf{mobit-symm-n}
	\end{align*}
	and the rest of the induction is direct because $\beta$ is well-defined. A
	similar reasoning applies to \eqref{laf:ybqqq}, however the base case of
	the induction cannot be proven simply by the rules of $\ccat J$, this is
	why \eqref{eq:mobit-missing} is added.
	Additionally, We use the same strategy than in Lemma~\ref{lem:beta-wd} to
	prove that $B_J$ is a prop morphism and maps the direct sums to control.
\end{proof}

Because $A_J$ and $B_J$ are well-defined crop morphisms and have inverse
images, the categories $\ccat J\quo$ and $\cat{GL}(\Z_2)_2$ are crop
isomorphic.

\subsection{Controlled-V}

An active area of research in quantum computing deals with determining
the precise resources necessary for universal quantum computing to
emerge~\cite{howardetal:contextuality, vedral:elusive,
 jozsalinden:entanglement}. As we have seen, in the classical case,
universality emerges from the simplest possible theory of control,
namely the one generated by a crop with a single involution
$NOT : 1 \to 1$. A natural question to ask is how this prop must be
extended to give a model of universal quantum computation.

Recent work~\cite{kaarsgaard:ctrlv} shows that a controlled-$V$ gate,
where $V$ is a \emph{square root} of the NOT gate (i.e., satisfies $V
\circ V = NOT$), suffices to perform universal quantum computing. This
has the surprising consequence that a model of universal quantum
computation can be generated from the theory of control for a single
operation of order $4$.

\begin{theorem}
 Let $(\cat{V},+,0,V^2)$ be the crop given by the prop presented by a
 single generator $V : 1 \to 1$ and relation $V^4 = \id$. Then there
 exists a prop morphism $\ccat{\cat{V}} \to \cat{Unitary}$ whose
 image is universal for quantum computing.
\end{theorem}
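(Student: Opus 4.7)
The plan is to combine the universal property of the rigged crop (Corollary~\ref{cor:suniv}) with the controlled-to-rigged isomorphism of Theorem~\ref{th:cp-iso}, and then to invoke the universality result of~\cite{kaarsgaard:ctrlv} to handle the second part of the statement. The target $\cat{Unitary}$ is naturally presented as $\cat U_2$, where $\cat U$ is the semisimple bipermutative category of unitaries on finite-dimensional complex Hilbert spaces (with chosen ordered bases), equipped with direct sum $\oplus$ and Kronecker product $\otimes$ as its two symmetric monoidal structures; this makes Corollary~\ref{cor:suniv} directly applicable.

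First I would define a strict monoidal functor $F : \cat V \to \cat U$ on the base crop. Since $\cat V$ is freely presented by a single generator $V : 1 \to 1$ subject to $V^4 = \id$, it suffices to fix $F(V)$ to be the standard $\sqrt X$ gate, i.e.\ the $2\times 2$ unitary whose square equals $X$ and whose fourth power equals the identity. The relation is then satisfied on the nose, so $F$ is well-defined, and by construction $F(V^2) = X = \gamma_{1,1}$ in $\cat U$. Corollary~\ref{cor:suniv} therefore applies and delivers a unique prop morphism $\ovl F_2 : \scat{\cat V}_2 \to \cat U_2 = \cat{Unitary}$ that preserves NOT. Precomposing with the isomorphism $\ccat{\cat V} \cong \scat{\cat V}_2$ from Theorem~\ref{th:cp-iso} produces the desired prop morphism $\ccat{\cat V} \to \cat{Unitary}$.

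To complete the proof I would then argue universality of the image. The morphism sends $V$ to $\sqrt X$, the distinguished involution $V^2$ to the NOT gate $X$, and each controlled operator $C^b(-)$ to the corresponding direct-sum block, so in particular the image contains both the NOT gate and the singly-controlled-$\sqrt X$ gate. The main result of~\cite{kaarsgaard:ctrlv} asserts that this gate set is universal for quantum computing, which is precisely the required conclusion.

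The only genuine subtlety, and the one I would be most careful about, is the strictification of $\cat U$ so that it truly qualifies as a semisimple bipermutative category in the sense required by Corollary~\ref{cor:suniv}: one must pick canonical representatives for the direct sums of $\C^{2^n}$ and a strictly associative tensor, and verify that the chosen $\sqrt X$ really yields $\gamma_{1,1}$ on the nose rather than only up to a coherence isomorphism. Once these conventions are pinned down---a standard manoeuvre on finite-dimensional Hilbert spaces with chosen ordered bases---everything else is a direct application of results already established in the paper, so no further substantial obstacle remains.
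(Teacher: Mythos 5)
Your proposal is correct, and the essential content coincides with the paper's proof: send $V$ to the standard square root of NOT and its controlled variants to the usual (iterated) controlled-$V$ gates, then cite~\cite{kaarsgaard:ctrlv} for universality of the image. The difference is one of justification: the paper simply asserts this assignment as a prop morphism, whereas you derive its well-definedness by factoring through Corollary~\ref{cor:suniv} and the isomorphism $\ccat{\cat V} \cong \scat{\cat V}_2$ of Theorem~\ref{th:cp-iso}. Your route is more rigorous --- it reduces the check to verifying that $F(V^4)=\id$ and $F(V^2)=\gamma_{1,1}$ hold in a (strictified) bipermutative category of unitaries, after which uniqueness and existence are automatic --- at the cost of the strictification bookkeeping you rightly flag; the paper's direct version buys brevity but leaves the verification that the control equations are respected implicit. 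Both yield the same functor and the same appeal to~\cite{kaarsgaard:ctrlv}, so there is no gap.
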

\begin{proof}
 Send $V$ to the usual $V$-gate $\tfrac{1}{2}\left(\begin{smallmatrix}
 1+i & 1-i \\ 1-i & 1+i
 \end{smallmatrix}\right)$, and its controlled variants to the usual
 (iterated) controlled $V$-gate.
\end{proof}

Here, universal is meant in the sense of computational universality~\cite{aharonov:toffolihadamard}.
While this gives a model of universal quantum computing, its
equational theory is not complete (i.e., the functor is not faithful);
this can be seen from the fact that the image of $V$ could have been
chosen to be the $S$-gate, which is not universal for quantum
computing even given arbitrary control (it generates a theory of
\emph{monomial matrices}). The precise relations needed to pin down
the theory of controlled-$V$ is an open question.

\subsection{Quantum circuits}

A complete equational theory for quantum circuits was an open question that was
solved only recently~\cite{clementetal:complete} with several later
improvements~\cite{clementetal:extensions, clementetal:minimal,
delorme2025control}. These papers already refer to some equations as
\emph{structural}. We show here that these equations are structural in a formal
and categorical sense: they are only about control, and follow directly from the
structure of a rig category.

Conjugation plays an important role in controlled quantum
circuits~\cite{delorme2025control}, because of the following equation.
\begin{equation}\label{eq:conj}
	\tf{conjugate} = \tf{conjugate-4}
\end{equation}
Note that this conjugation equation follows from our control theory as follows.
\begin{align*}
	\tf{conjugate}
	&\stackrel{\text{\ref{cprop:composition}}}{=}
	\tf{conjugate-2} \\
	&\stackrel{\text{\ref{cprop:commutativity}}}{=}
	\tf{conjugate-3} \\
	&\stackrel{\text{\ref{cprop:complementarity}}}{=}
	\tf{conjugate-4}
\end{align*}

Let $(\cat Z, +, 0)$ be the prop generated by $\alpha \colon 0 \to 0$ for $\alpha \in
\R$, and $Z(\alpha) \colon 1 \to 1$ for $\alpha \in \R$, and $H \colon 1 \to 1$
satisfying
\begin{align}
	\label{eq:2pi-phase}
	\tf{2pi-phase} &= \tf{2pi-phase-2}
	\\[1.5ex]
	\label{eq:phase-sum}
	\tf{phase-sum} &= \tf{phase-sum-2}
	\\[1.5ex]
	\label{eq:z-id}
	\tf{z-id} &= \tf{h-invo-2}
	\\[1.5ex]
	\label{eq:z-sum}
	\tf{z-sum-phase} &= \tf{z-sum-phase-2}
	\\[1.5ex]
	\label{eq:h-invo}
	\tf{h-invo} &= \tf{h-invo-2}
	\\[1.5ex]
	\label{eq:euler}
	\tf{euler} &= \tf{euler-2}
\end{align}
where \eqref{eq:euler} is the well-known Euler decomposition, in which
$\beta_0$, $\beta_1$, $\beta_2$ and $\beta_3$ can be computed from $\alpha_1$
and $\alpha_2$ deterministically. We choose the crop $(\cat Z, +, 0, HZ(\pi)H)$
and can now form its controlled prop $\ccat Z$. Interestingly, the prop $\ccat
Z$ is not immediately the prop of unitary operations: for a given $\alpha$, the
morphisms $C^1(\alpha)$ and $Z(\alpha)$ have the same semantics in unitaries
but are still different in $\ccat Z$. We need to quotient further with the
following equation:
\begin{equation}\label{eq:final-quantum}
	\tf{final-equation} = \tf{final-equation-2}
\end{equation}
and we write $\ccat Z\quo$ for this category. From this point, it is simple to
show that we have equivalent equations to the complete equational theory for
quantum circuits~\cite{delorme2025control}, and therefore $\ccat Z\quo$ is
isomorphic to the category of unitaries on qubits.

\begin{theorem}
	Equations \eqref{eq:2pi-phase}, \eqref{eq:phase-sum}, \eqref{eq:h-invo},
	\eqref{eq:euler} together with the control equations of
	Figure~\ref{fig:control-equations} and equation \eqref{eq:final-quantum},
	are sound and complete for quantum circuits.
\end{theorem}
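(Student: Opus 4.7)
My plan is to adapt the two-functor strategy used in the Schumacher--Westmoreland theorem, leveraging Theorem~\ref{th:cp-iso} so that the rig-structural content is handled automatically and only the genuinely quantum axioms require hands-on verification.

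For soundness, I would define a prop morphism $A_Z \colon \ccat{Z}\quo \to \cat{Unitary}_2$ by sending $Z(\alpha)$, $H$, and each global phase $\alpha$ to its standard matrix, and extending via the $\ccat{\cdot}$ construction. Since $\cat{Unitary}$ is semisimple bipermutative with $HZ(\pi)H$ interpreted as the symmetry $\gamma_{1,1}$, Corollary~\ref{cor:suniv} delivers this extension for free and automatically guarantees that all control equations of Figure~\ref{fig:control-equations} are validated. What remains is to check \eqref{eq:2pi-phase}--\eqref{eq:euler} and \eqref{eq:final-quantum} in $\cat{Unitary}$, which reduces to routine single-qubit $2\times 2$ matrix calculations.

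For completeness, I would construct an inverse $B_Z \colon \cat{Unitary}_2 \to \ccat{Z}\quo$ by routing through the complete equational theory of~\cite{delorme2025control}. I fix an evident translation of each of their generators into $\ccat{Z}\quo$ and verify each of their defining relations under this translation. Their equations split into a structural part, of which \eqref{eq:conj} derived above is the emblematic example, and a non-structural part closely mirroring \eqref{eq:2pi-phase}--\eqref{eq:euler} and \eqref{eq:final-quantum}. By Theorem~\ref{th:cp-iso}, the structural equations are exactly those enforced by rig structure, so they hold in $\ccat{Z}\quo$ with no extra work. Once $B_Z$ is in hand, showing $A_Z \circ B_Z = \iid$ and $B_Z \circ A_Z = \iid$ reduces to checking agreement on generators, which is immediate by construction.

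The main obstacle is precisely the matching in the completeness half: lining up our compact list of non-structural axioms with the differently packaged non-structural axioms of~\cite{delorme2025control}. Equation~\eqref{eq:final-quantum} is the decisive bridge collapsing $C^1(Z(\alpha))$-like morphisms back onto phases so that \eqref{eq:euler} can do the Euler-decomposition work, and I expect that carefully routing every non-structural axiom of~\cite{delorme2025control} through \eqref{eq:euler} and \eqref{eq:final-quantum}, while certifying that no further relations are needed, is where the real effort lies.
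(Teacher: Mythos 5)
Your proposal is correct and follows essentially the same route as the paper: soundness by interpreting the generators as their standard unitaries, and completeness by reducing to the known complete equational theory of~\cite{delorme2025control}, with the structural equations (such as the conjugation identity~\eqref{eq:conj}) discharged by the control/rig machinery and the remaining non-structural axioms matched against \eqref{eq:2pi-phase}--\eqref{eq:euler} and \eqref{eq:final-quantum}. The paper's own proof is just a terser statement of exactly this reduction.
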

\begin{proof}
	The set of generators is universal, and it suffices to show that all the required equations follow from ours. The circuit equations from the
	literature~\cite{delorme2025control} contain similar equations to
	\eqref{eq:2pi-phase}, \eqref{eq:phase-sum}, \eqref{eq:z-sum},
	\eqref{eq:h-invo}, and~\eqref{eq:euler}, given that equation
	\eqref{eq:final-quantum} holds. Moreover, we have proven that controlled
	conjugation \eqref{eq:conj} holds in our control theory. This is enough to
	cover the set of equations for quantum circuits~\cite{delorme2025control}.
\end{proof}

In fact, the same general strategy works to obtain a complete
equational theory for quantum circuits formed from certain discrete
gate sets by exploiting the rig structure shown in
\cite{carette2024easy}. Let $(\cat{\Pi},+,0)$ be the prop generated by
$\omega : 0 \to 0$, $V : 1 \to 1$, and $S : 1 \to 1$ satisfying
\begin{align}
 \tf{omega} &= \tf{2pi-phase-2} \label{eq:omega} \\[1.5ex]
 \tf{v-order-4} &= \tf{h-invo-2} \label{eq:v-order-4} \\[1.5ex]
 \tf{SVS} &= \tf{VSV} \label{eq:svs}
\end{align}
Choose now the crop $(\cat{\Pi},+,0,V^2)$, form its controlled
prop $\ccat{\Pi}$, and quotient it further by the equation
\begin{equation}
	\tf{s} = \tf{ctrl-i} \label{eq:ctrl-i}
\end{equation}
to obtain the category $\ccat \Pi\quo$. Recall that the \emph{Gaussian
Clifford+$T$} gate set~\cite{amy2020number} consists of the gates $S$,
$K$, $X$, $CX$, and $CCX$, where $K = \tfrac{1-i}{\sqrt{2}}H$, and $H$
is the usual Hadamard gate. These gates are expressible using the
generators $S$,$V$, $\omega$ as well as the control functors: $K =
\omega^6 \bullet SVS$ (where $\bullet$ denotes the usual scalar
multiplication in symmetric monoidal categories), $X = V^2$ is the
distinguished involution, $CX = C^1(X)$, and $CCX = C^1(CX)$. To
express Clifford+$T$, we additionally use $H = \omega^7 \bullet SVS$
and $T = C^1(\omega)$.

This equational theory can express a number of more restricted
theories, including ones that have known presentations (notably
Clifford~\cite{selinger:clifford}, unitaries generated by Gaussian
Clifford+$T$~\cite{bianselinger:unitaries}, and $2$-qubit
Clifford+$T$~\cite{bianselinger:cliffordt}). It was shown
previously~\cite{carette2024easy,caretteetal:free} that the equations
of these theories are implied by the ones above, making the larger
theory (equationally) sound and complete for these.
\begin{theorem}
 Equations \eqref{eq:omega}, \eqref{eq:v-order-4}, \eqref{eq:svs},
 together with the control equations of
 Figure~\ref{fig:control-equations} and equation
 \eqref{eq:ctrl-i}, are sound and complete for Clifford circuits,
 Clifford+$T$ circuits with $\le 2$ qubits, and Gaussian Clifford+$T$ circuits.
\end{theorem}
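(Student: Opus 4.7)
The plan is to follow the same template as the preceding theorem for the $\cat Z$ based axiomatisation: exhibit a universal interpretation into the appropriate unitary prop, and then derive each equation of a known complete axiomatisation from our control equations plus \eqref{eq:omega}, \eqref{eq:v-order-4}, \eqref{eq:svs}, and \eqref{eq:ctrl-i}. Concretely, I would fix the target $\cat U$ to be, in turn, the Clifford prop, the Clifford$+T$ prop restricted to at most two qubits, and the Gaussian Clifford$+T$ prop, and map $\omega$ to the global scalar $e^{i\pi/4}$, $V$ to the $V$-gate (square root of $X$), and $S$ to the standard $S$-gate, extending multiplicatively through $\ccat\Pi\quo$. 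Soundness then reduces to checking that the listed equations hold in matrices, which is an easy computation, together with the fact that control in $\cat U$ is strict direct sum and hence automatically models the control equations of Figure~\ref{fig:control-equations}.

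For universality of the generators, I would invoke that $V^2$ is $X$ and, together with $S$, $\omega$, $H = (1{+}i)V S V$ (obtainable from \eqref{eq:svs}) generate Clifford already at the level of single qubits and with one control, so $CX$ and $CCX$ arise automatically in $\ccat\Pi\quo$. For Gaussian Clifford$+T$, the gate $K = \tfrac{1-i}{\sqrt 2}H$ is expressible as a product of $\omega$, $V$, $S$ via \eqref{eq:svs}, so the image of the interpretation includes $\{S,K,X,CX,CCX\}$ and hence is surjective onto the target.

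For completeness, I would appeal to the complete axiomatisation for these gate sets given in~\cite{carette2024easy}, which is itself organised around a rig structure; the rig equations in that presentation are precisely those encoded by Figure~\ref{fig:control-equations} via Theorem~\ref{th:cp-iso} and Corollary~\ref{cor:suniv}, so they come for free in $\ccat\Pi\quo$. The remaining non-structural equations of \cite{carette2024easy} are single-qubit relations among $\omega$, $V$, $S$ (which are our \eqref{eq:omega}, \eqref{eq:v-order-4}, \eqref{eq:svs}) together with a relation identifying $S$ with the controlled scalar $i$, which is exactly our \eqref{eq:ctrl-i}. Concretely, I would check each remaining equation of~\cite{carette2024easy} by first pushing controls inward using \ref{cprop:composition}--\ref{cprop:swapcoh}, reducing it to a single-qubit identity among products of $\omega, V, S$; the controlled-conjugation identity \eqref{eq:conj} already derived in the $\cat Z$ case can be reused verbatim.

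The main obstacle is the derivation of the Euler-like decompositions in the Gaussian Clifford$+T$ case purely from \eqref{eq:omega}--\eqref{eq:svs} and \eqref{eq:ctrl-i}: in the $\cat Z$ axiomatisation this was axiomatic as \eqref{eq:euler}, while here it must be inferred from the order-$4$ identity on $V$, the commutation relation \eqref{eq:svs}, and the way \eqref{eq:ctrl-i} forces $S$ to behave as a controlled scalar. I expect this to reduce to a finite case analysis on normal forms for words in $\{\omega,V,S\}$ modulo \eqref{eq:omega}--\eqref{eq:svs}, possibly mediated by the Gray-induction technique used in Lemma~\ref{lem:beta-wd} to lift single-qubit identities to their multi-controlled analogues; once the single-qubit case is settled, bifunctoriality of $\otimes$ in the rigged crop and the strength/commutativity/complementarity equations propagate the identity to all arities.
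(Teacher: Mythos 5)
Your proposal follows essentially the same route as the paper: reduce to the completeness results of \cite{carette2024easy}, note that the rig axioms come for free from the control equations via Theorem~\ref{th:cp-iso}, and check that the few remaining relations ($\omega^8=\id$, $V^2=\gamma_{1,1}$, and $SVS=VSV$ with $S=\id\oplus\omega^2$) follow from \eqref{eq:omega}, \eqref{eq:v-order-4}, \eqref{eq:svs} and \eqref{eq:ctrl-i}. The ``main obstacle'' you anticipate is moot: \cite{carette2024easy} already establishes completeness from exactly these relations plus the rig structure, so no Euler-type decomposition or normal-form case analysis needs to be derived.
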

\begin{proof}
 Establishing the required completeness results in
 \cite{carette2024easy,caretteetal:free} only required $\omega^8 =
 \id$ (which follows by \eqref{eq:omega}), $V^2 = \symplus_{1,1}$
 (which follows by \eqref{eq:v-order-4} and that fact that $V^2$ is
 the chosen involution), and $SVS = VSV$ where $S = \id \oplus
 \omega^2$ (which follow from \eqref{eq:svs} and \eqref{eq:ctrl-i}),
 as well as the axioms of rig categories, which are implied by the
 control equations of Figure~\ref{fig:control-equations} by
 Theorem~\ref{th:cp-iso}.
\end{proof}

\section{Conclusion}
\label{sec:conclusion}

Our results raise several interesting questions, that we leave to future work.
\begin{itemize}
	\item The control equations of \Cref{fig:control-equations} have natural interpretations that seem to indicate they are independent. But is this set of equations indeed minimal? Can we find models that satisfy all but one of the control equations?

	\item Several circuit optimisations in the literature rely on auxiliary
		wires in a circuit to enact controlled
		operations~\cite{clementetal:minimal,clementetal:extensions}, and other
		works also make use of ancillas to obtain complete equational theories
		for CNOTs~\cite{cockettetal:cnot} and
		Toffolis~\cite{cockettcomfort:tof}. How do these relate to rigged
		crops?

	\item There is no quantum circuit implementation of a controlled unitary where the unitary is a black box input~\cite{araujoetal:nocontrol}. Yet there are many physical implementations bypassing the strict confines of this no-go theorem~\cite{friisetal:unknowncontrol,bisioetal:conditional}. The latter rely on the ability to identify subspaces of qubits by adding auxiliary dimensions, harmonising with our results. Is there a version of the construction of \Cref{def:cprop} that allows for auxiliary wires?

	\item We have isolated the control aspects from the rest of the computation. Can this be used to analyse causal dependencies within the data flow of the computation?

	\item We have worked out examples adding computational control to several props such as a single $X$ gate, a single $\sqrt{X}$ gate, and an $X$ gate together with a Hadamard gate. What does the controlled prop look like for other concrete base circuit theories?

	\item The control equations of \Cref{fig:control-equations} implicitly assume only two possibilities on each wire: positive and negative control. Can we extend the theory from two-valued data like bits or qubits to data with multiple values like qutrits?

	\item We have worked with bipermutative categories, where the fact that $\oplus$ is symmetric corresponds to the fact $x^2=1$ that the NOT gate is an involution. Are there interesting theories when we generalise to $\oplus$ being merely braided?

	\item Boolean functions are fundamental to bounded decision diagrams and SAT-solving. Do our results imply anything practically useful in those applications?
\end{itemize}

\subsection*{Acknowledgements}

The authors would like to thank the members of the quantum programming group,
specifically Wang Fang, and of the category theory group at Edinburgh for their
support and feedback. We extend our thanks to Robert Booth, Kostia Chardonnet,
Noé Delorme, Nicolas Heurtel, Simon Perdrix, and Peter Selinger for comments
and fruitful discussions that helped us improve the paper. We also thank the
anonymous reviewers for their comments and suggestions. This research was
funded by the Engineering and Physical Sciences Research Council (EPSRC) under
project EP/X025551/1 “Rubber DUQ: Flexible Dynamic Universal Quantum
programming”. Robin Kaarsgaard was supported by Sapere Aude: DFF-Research
Leader grant 5251-00024B.

\bibliography{ref}

\appendix
\onecolumn

\section{Definitions}
\label{app:defs}
\begin{definition}[Bipermutative category]
	A \emph{bipermutative category} $(\cat C, \otimes, I, \oplus, O)$ is a
	category such that $(C, \otimes, I)$ and $(C, \oplus, O)$ are strict
	symmetric monoidal categories, and the following conditions are satisfied:
	\begin{itemize}
		\item for all objects $A$, we have $O \otimes A = O = A \otimes O$; for
			all morphisms $f \colon A \to B$, we have $f \otimes \iid_O =
			\iid_O = \iid_O \otimes f$;
		\item for all objects $A, B, C$ and morphisms $f, g, h$, we have that
			\begin{equation}\label{eq:biper-eq-types}
				(A \oplus B) \otimes C = (A \otimes C) \oplus (B \otimes C)
			\end{equation}
			and
			\begin{equation}\label{eq:biper-eq}
				(f \oplus g) \otimes h = (f \otimes h) \oplus (g \otimes h)
			\end{equation}
			and the following diagram commutes:
			\begin{equation}\label{eq:biper-sym}
				\begin{tikzcd}
					(A \oplus B) \otimes C & (A \otimes C) \oplus (B \otimes C) \\
					(B \oplus A) \otimes C & (B \otimes C) \oplus (A \otimes C)
					\arrow["=", from=1-1, to=1-2]
					\arrow["=", from=2-1, to=2-2]
					\arrow["\symplus \otimes \iid", from=1-1, to=2-1]
					\arrow["\symplus", from=1-2, to=2-2]
				\end{tikzcd}
			\end{equation}
		\item if we define a natural left distributor $\delta$ as:
			\begin{equation}\label{eq:biper-dist}
				\begin{tikzcd}[column sep=small]
					A \otimes (B \oplus C) & (B \oplus C) \otimes A & (B
					\otimes A) \oplus (C \otimes A) & (A \otimes B) \oplus (A
					\otimes C)
					\arrow["s", from=1-1, to=1-2]
					\arrow["=", from=1-2, to=1-3]
					\arrow["s \oplus s", from=1-3, to=1-4]
				\end{tikzcd}
			\end{equation}
			then the following diagram commutes for all objects $A, B, C, D$:
			\begin{equation}\label{eq:biper-coh}
				\begin{tikzcd}
					(A \oplus B) \otimes (C \oplus D) & (A \otimes (C \oplus D)) \oplus (B \otimes (C \oplus D)) \\
					((A \oplus B) \otimes C) \oplus ((A \oplus B) \otimes D) & (A
					\otimes C) \oplus (A \otimes D) \oplus (B \otimes C) \oplus (B
					\otimes D) \\
					& (A \otimes C) \oplus (B \otimes C) \oplus (A \otimes D) \oplus (B \otimes D)
					\arrow["=", from=1-1, to=1-2]
					\arrow["\delta", from=1-1, to=2-1]
					\arrow["\delta \oplus \delta", from=1-2, to=2-2]
					\arrow["=", from=2-1, to=3-2]
					\arrow["\iid \oplus \symplus \oplus \iid", from=2-2, to=3-2]
				\end{tikzcd}
			\end{equation}
	\end{itemize}
	A \emph{strict bipermutative functor} is a functor which is a strict symmetric
	monoidal functor with respect to both symmetric monoidal structures $(C,
	\otimes, I)$ and $(C, \oplus, O)$.
\end{definition}

\end{document}